\newtheorem{lemma}{Lemma}
\newtheorem{theorem}[lemma]{Theorem}
\newtheorem{remark}{Remark}
\newtheorem{definition}{Definition}
\title{Asymptotically Optimal Synthesis of Reversible Circuits}
\begin{document}

\begin{frontmatter}
\author{Xian Wu\corref{email}}
\author{Lvzhou Li\corref{mycorrespondingauthor}}%\footnote{L. Li is the corresponding author (lilvzh@mail.sysu.edu.cn)}}
\address{Institute of Quantum Computing and Software, School of Computer and Engineering, Sun Yat-sen University, Guangzhou 510006, China}
\cortext[email]{wuxian3@mail2.sysu.edu.cn}
\cortext[mycorrespondingauthor]{Corresponding author. lilvzh@mail.sysu.edu.cn}

%TODO mandatory: add short abstract of the document
\begin{abstract}

Reversible circuits have been  studied extensively and intensively, and  have plenty of  applications in various areas, such as digital signal processing, cryptography, and especially
quantum computing.  In 2003, the lower bound $\Omega(2^n n/\log n)$ for the synthesis of  $n$-wire reversible circuits was proved. Whether this lower bound {\color{black} has a matching upper bound}  was   listed as one of the future challenging open problems in the  survey (M. Saeedi and I. L Markov, ACM Computing Surveys, 45(2):1–34, 2013).
In this paper we  propose an algorithm to implement an arbitrary $n$-wire reversible circuit with no more than $O(2^n n/\log n)$ elementary gates, and thus  close the open problem.

\end{abstract}

\begin{keyword}
Quantum Computing, Reversible Circuit Synthesis, Circuit Optimization.
\end{keyword}

\end{frontmatter}

\section{Introduction}
\subsection{Background}
Reversible circuits play a crucial role in both classical   and  quantum computing, and one can refer to   the excellent survey \cite{saeedi2013synthesis} for full information. In classical computing, many cryptographic algorithms \cite{shi2000bit, hilewitz2008fast, thorat2018implementation} use bit permutation instructions, whose performance can be improved by using better reversible computation synthesis methods. In addition, signal processing often requires reversible transforms, and the reversible data hiding  algorithm is useful in  signal processing \cite{gao2015reversible, xiong2018integer}.
In quantum  computing,   reversible circuits are  indispensable. First,  some prominent quantum algorithms  use  blocks of reversible circuits \cite{beckman1996efficient, markov2012constant, niemann2015synthesis}, for example, the modular exponentiation in Shor's algorithm \cite{shor1994algorithms}. Second, reversible circuit synthesis has a great significance in quantum computing, as pointed out in \cite{abdessaied2016reversible} that the non-optimality of quantum circuits derives mainly from the exponential complexity of reversible circuits. Third, reversible circuits are a fundamental part of quantum compilation, as mentioned in  \cite{zulehner2020introducing} that "Design of Boolean Components" is one of two tasks to compile quantum algorithms.

%We call a computation reversible if it can be "undone" in the sense that we can use the output to reconstruct the input correctly, which means no information is lost during the computation \cite{toffoli1980reversible}. 
Reversible logic synthesis  is the process of generating a compact reversible circuit from a given specification. More formally, given a reversible Boolean function or a permutation matrix, reversible logic synthesis considers how to implement it by a reversible circuit.  A frequently used cost is the number of elementary gates (also known as size), which will also be used in this paper. 
Furthermore, when incorporating redundant input-output line-pairs (referred to as ancillary bits or temporary storage channels) in the synthesis process, the number of elementary gates needed may decrease; however, this unavoidably leads to additional space overhead. The utilization of ancillary bits is also a crucial factor to consider in circuit synthesis, particularly in the realm of quantum compilation. Therefore, it is advisable to minimize their usage.

By using Lupanov representation, one can build a circuit for any boolean function $f:\{0,1\}^n \rightarrow \{0,1\}$ with no more than $O(2^n/n)$ elementary gates. 
% It  thus implies that one can construct an arbitrary $n$-bit reversible circuit with no more than $O(2^n)$ elementary gates. 
{\color{ black} It thus implies that for an arbitrary $n$-bit reversible function, one can construct a circuit with no more than $O(2^n)$ elementary gates. Furthermore, as shown in \cite{nielsen2002quantum}, every irreversible elementary gate such as  the AND gate can be simulated by  a Toffoli gate with ancillary bits. Thus, for an arbitrary $n$-bit reversible function, one  can construct a reversible circuit with no more than $O(2^n)$ gates. However,
this trivial idea has the defect that exponential ancillary bits are required.
In this paper, we will propose a reversible logic
synthesis  approach which dose not  employ any ancillary bit. }
% It is worth noting that a thorough examination reveals that the inclusion of a polynomial number of ancillary bits does not aid in the reduction of the lower bound of $\Omega(2^n n/\log n)$. 
%{\color{ black}A detailed examination highlights an important finding: the polynomial number of auxiliary qubits added does not, in fact, cause a reduction of the lower bound of $\Omega(2^n n/\log n)$.}
%Remarkably, our approach successfully attains this lower bound without any reliance on auxiliary bits.

\subsection{Related work}

In 2003, Ref. \cite{shende2003synthesis} proved  that an $n$-wire reversible circuit needs at least $\Omega(2^n n/\log n )$ gates in the worst case, no matter whether classical or quantum elementary gates are used (see Lemma \ref{lowerbound_lemma} for details). Whether this lower bound {\color{ black} has a matching upper bound} was   listed as one of the future challenging open problems in the  survey \cite{saeedi2013synthesis}.
There are some works devoted to approaching the lower bound  $\Omega(2^n n/\log n )$. Ref. \cite{abdessaied2014upper} used Young subgroups and exclusive sum-of-products (ESOPs) to implement reversible circuits, but the upper bound is  $O(n2^n)$.  The best upper bound   $8.5n 2^n +o(2^n)$ was obtained by a cycle-based method \cite{saeedi2010reversible}, which is better than the previous bounds in terms of coefficients, but the order of magnitude remains the same. Note that these results above are without the use of auxiliary bits and require the use of quantum gates. With the help of auxiliary bits,   Binary Decision Diagram (BDD) -based synthesis can construct reversible circuits much faster, yet they still require $O(2^n n)$ elementary gates in the worst case \cite{wille2009bdd}.

%and the best upper bound so far is $8.5n 2^n +o(2^n)$ \cite{saeedi2010reversible}, of which the synthesis method is based on cycle.  BDD-based synthesis can construct reversible circuits faster, but still needs $O(2^n n)$ elementary gates in the worst case \cite{wille2009bdd}. Besides, Ref. \cite{abdessaied2014upper} used Young subgroups and exclusive sum-of-products (ESOPs) to implement reversible circuits, but the upper bound is still  $O(n2^n)$. 

\subsection{Motivation}
Despite the effort of the above works, the problem whether the lower bound  $\Omega(2^n n/\log n )$ {\color{ black} has a matching upper bound} has been left open for about 20 years. The significance of this problem can be seen as follows. (i) In theory, the pursuit of tight complexity bounds has been an eternal topic in theoretical computer science, and a lot of works  are to continuously improve some complexity bounds little by little. (ii) In experiment, optimizing the size  of a quantum circuit  is  critical  for the success of  experiments, and even a constant multiple improvement of the circuit size  is very valuable. Before the realization of large-scale error correction, the noise of quantum devices is unavoidable, and the noise impact of the entire quantum circuit will increase monotonically with the circuit size. Therefore, it is of great importance to optimize the  size of quantum circuits.

%Multiple-qubit gate can be constructed by single-qubit gates and entangling gates, they are not so elementary unless they have hardware support. Thus, we take $8.5n 2^n +o(2^n)$ as the current upper bound.

\subsection{Contributions}
In this paper, we will present a new synthesis method that can achieve the lower bound $\Omega(2^n n/\log n)$, thus closing an open problem.
\begin{theorem}\label{maintheorem}
  {\color{ black} Any $n$-bit reversible function can be computed by a reversible  circuit with no more than $O(2^n n/\log n)$ elementary gates and  without the use of ancillary bits.}
    \label{main-theorem}
\end{theorem}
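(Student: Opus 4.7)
The plan is to build the circuit in two conceptually distinct layers. In the outer layer I would decompose any target permutation $\pi$ on $\{0,1\}^n$ into a composition
\[
  \pi \;=\; G_{i_1,f_1} \circ G_{i_2,f_2} \circ \cdots \circ G_{i_N,f_N}, \qquad N = O(n),
\]
where each \emph{generalized Toffoli} $G_{i,f}$ is the map
\[
  (x_1,\ldots,x_n) \longmapsto (x_1,\ldots,x_{i-1},\;x_i\oplus f(x_1,\ldots,\hat{x}_i,\ldots,x_n),\;x_{i+1},\ldots,x_n)
\]
for some Boolean function $f$ of $n-1$ variables. This is a classical, essentially linear-in-$n$ decomposition obtained by fixing the bits of $\pi$ one at a time: at stage $j$, choose $f_j$ so that composing $G_{i_j,f_j}$ with the residual permutation makes it act as the identity on one additional bit, reducing the permutation to the identity after $O(n)$ stages. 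A simple degree-of-freedom count ($N\cdot 2^{n-1}$ free bits vs.\ $\log((2^n)!)\sim n\,2^n$) shows that $N$ of order $n$ suffices.

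In the inner layer I would implement each generalized Toffoli $G_{i,f}$ in $O(2^n/\log n)$ elementary gates without ancillas. The total budget is then $N\cdot O(2^n/\log n) = O(2^n n/\log n)$, matching the lower bound recalled in Section~1.2. For the inner layer the strategy is a reversible analogue of Lupanov's representation: split the $n-1$ inputs of $f$ into a short prefix $u$ of length $k\approx\log n$ and a long suffix $v$, write
\[
  f(u,v)=\bigoplus_{a\in\{0,1\}^k} [u=a]\,f(a,v),
\]
and arrange the $2^k$ column functions $f(a,\cdot)$ so that their evaluations can be shared across many values of $v$ in the Lupanov manner. Performing this construction \emph{in place}, i.e.\ using some of the other $n-1$ input wires themselves as workspace, is what pushes the bound below the trivial $O(n\,2^n)$ of the existing cycle-based methods.

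The main obstacle is precisely the in-place constraint. With ancillas, Lupanov gives $O(2^n/n)$ per generalized Toffoli and the overall bound would be $O(2^n)$; the extra $n/\log n$ factor in the ancilla-free setting is exactly the price of recomputing sub-tables rather than storing them in fresh bits. The technical heart of the proof must therefore be an explicit scheduling of the Lupanov sub-computations on the $n-1$ non-target wires: bits are temporarily overwritten to carry partial products, XORed into the target bit $x_i$, and then cleanly uncomputed so that the final register is exactly $G_{i,f}(x)$. The delicate point is to verify (i) that the borrow-and-restore phases remain reversible at every step, (ii) that consecutive generalized Toffolis do not conflict on their borrowed workspace, and (iii) that the aggregate cost of computation plus uncomputation stays within $O(2^n/\log n)$ per Toffoli. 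Carrying out this scheduling and pinning down its constants is where I expect the real work to lie.
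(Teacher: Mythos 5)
Your outer layer is fine in substance: the decomposition of an arbitrary reversible function into $O(n)$ single-target gates $G_{i,f}$ is the known Young-subgroup decomposition ($2n-1$ such gates suffice; your ``fix one bit per stage'' sketch actually needs two gates per bit, and a degree-of-freedom count is not a proof, but this part is standard and is exactly the starting point of the $O(n2^n)$ bound of Abdessaied et al.\ cited in the paper). The genuine gap is the inner layer. You assert that each $G_{i,f}$, with $f$ an arbitrary Boolean function of the other $n-1$ wires, can be realized with $O(2^n/\log n)$ elementary gates \emph{without ancillas}, and you yourself defer the construction (``where I expect the real work to lie''). But that claim is not a routine adaptation of Lupanov: Lupanov's $O(2^n/n)$ bound relies on storing the evaluations of the column functions $f(a,\cdot)$ in fresh memory so they can be shared across many suffixes $v$; in the ancilla-free setting the only ``workspace'' is the $n-1$ control wires, which carry the very input that addresses the table and must be restored. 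Borrowed (dirty) bits are known to help for things like multi-controlled NOTs, but they give only $O(n)$ bits of reusable space, so the amount of sharing is bounded by $\mathrm{poly}(n)$ intermediate values, and a recompute-per-use schedule lands you back at $\Theta(2^n)$ or worse per single-target gate, i.e.\ no better than the existing $O(n2^n)$ total. Nothing in your sketch shows the recomputation overhead can be held to $O(2^n/\log n)$; since that inner claim has the same counting lower bound $\Omega(2^n/\log n)$ and the same no-ancilla obstruction as the theorem itself, you have essentially restated the problem rather than solved it.

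For contrast, the paper sidesteps per-gate table sharing entirely. It works with the set $S_P=\{x: P\ket{x}\neq\ket{x}\}$ and shows (Lemmas \ref{lemma-matrix}--\ref{Pelemma}, Theorem \ref{mainswap}) that as long as $|S_P|>2^{n/2-1}$ one can pick $\Theta(n/\log n \cdot \log n)=2^{m-1}=\Theta(n)$ pairs $(i_t,x_{i_t})$, conjugate them by a \emph{linear} reversible circuit (cost $O(n^2/\log n)$ by the Patel--Markov--Hayes bound, Lemma \ref{thorem:res}) plus a few Toffolis and a cheap $P_{map}$, and then kill all $\Theta(n)$ pairs with a single multi-controlled NOT. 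This removes $\Theta(n)$ elements of $S_P$ per round at cost $O(n^2/\log n)$, so $O(2^n/n)$ rounds give $O(2^n n/\log n)$, and the residue with $|S_{P_r}|\le 2^{n/2-1}$ is finished off by brute-force transpositions at negligible cost. In other words, the $\log n$ saving comes from the linear-reversible-circuit bound applied $O(2^n/n)$ times, not from an in-place Lupanov argument; if you want to salvage your route, the missing ingredient is precisely an ancilla-free $O(2^n/\log n)$ synthesis of a single-target gate, which is not provided and is not known to follow from the tools you invoke.
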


\begin{remark}\label{gate_remark}
 The elementary gates we use here are $\{NOT, S, T, H, CNOT, controlled-R_x(\theta), controlled-V_i, controlled-V_i^\dagger\}$, where $\theta \in \{\pi, \pm\frac{\pi}{2}, \pm\frac{\pi}{4}, \cdots, \pm\frac{\pi}{2^{n-3}}\}$ and  $V_i$ satisfies $(V_i)^{2^i} = NOT$ for $i \in \{1, 2, \cdots, n-2\}$.  This gate library meets the requirement in Lemma \ref{lowerbound_lemma}.  One can further decompose these controlled gates into several CNOT gates and single-bit gates, but for convenience, we treat them here as elementary gates. Actually, we can construct any $n$-wire reversible circuit with the NCT gate library \footnote{ The NCT gate library means the gate set of NOT, CNOT, and Toffoli.} and no more than one $C^{n-1}NOT$ gate \cite{shende2003synthesis} (see Definition \ref{CNOT}). NCT gate can be constructed with only $\{S, T, H, CNOT\}$, but the  $C^{n-1}NOT$ gate requires extra $\{controlled-R_x(\theta), controlled-V_i, controlled-V_i^\dagger\}$, adding up to $O(n)$ extra elementary gates \cite{da2022linear}.

 %(ii) Allowing auxiliary bits and disallowing bits are two  often-considered  cases in  synthesis of quantum circuits. In general, allowing auxiliary bits may reduce the depth and size of the circuit, but adds additional space overhead (the number of qubits used is also a very concerned indicator in quantum computing applications). Therefore, some work will consider the trade-off among the number of auxiliary bits, the circuit size, and the circuit depth, which is similar to the the time-space trade-off in complexity theory. For  reversible circuit synthesis, the lower bound  $\Omega(2^n n/\log n)$ holds even when polynomial-scale auxiliary bits are allowed. However, our scheme can achieve the lower bound without resorting to auxiliary bits. 

\end{remark}

% The general idea for proving Theorem \ref{maintheorem} is to  observe that if the number of inputs $x$ such that $f(x) \neq x$ is small enough (roughly $2^{n/2}$), then it is easy to synthesize $f$ (essentially using swaps). So the key step is to find a new permutation $g$ such that, when composed with the original one, has a small number of $x$ with $g(f(x)) \neq x$. In addition, we will show such a $g$ costs no more than $O(2^n n/\log n)$ element gates. This is done by using  appropriate structural lemmas (see Lemma \ref{lemma-matrix}, \ref{lemmaUV}, and \ref{id-lemma}), which are key to the argument.
The main idea behind the proof of Theorem \ref{maintheorem} is to recognize that if the number of inputs $x$ such that $f(x) \neq x$ is small enough (roughly $2^{n/2}$), then it is easy to synthesize $f$ (essentially using swaps). Therefore, the crucial step is to discover a new permutation, denoted as $g$, that, when composed with the original permutation, results in a small number of $x$ instances where $g(f(x)) \neq x$. Furthermore, we will demonstrate that such a permutation $g$ requires no more than $O(2^n n/\log n)$ elementary gates. This achievement is made possible by employing specific structural lemmas (refer to Lemma \ref{lemma-matrix}, \ref{lemmaUV}, and \ref{id-lemma}), which serve as foundational elements for our argument.

\subsection{Organization}
The remainder of this document is organized as follows. In Section \ref{sec2}, we provide essential background information and formal definitions. Section \ref{sec:opt} presents our synthesis methodology. In Section \ref{sec4}, we introduce an algorithm that is an enhancement of the approach described in Section \ref{sec:opt}, yielding improved performance compared to existing algorithms. Finally, we conclude in Section \ref{secCon}.

\section{Preliminaries}\label{sec2}
In this section, we shall introduce some notations and definitions related to reversible circuits and quantum gates. 

Let $ \mathbb{F}_2$ denote the two-element field.
For an $n$-bit  non-negative integer $i$, we shall employ $i$ as its binary representation.  $\Vec{i}$  shall depict its n-bit binary string in a column format, while $\ket{i}$  represents the $2^n$-dimensional vector with only the $(i+1)$-th element being 1 and the others being 0. To be more precise, let $i = \sum_{j=1}^{n} a_j 2^{n-j}$. Then we have $\Vec{i}=\begin{bmatrix}
    a_1&
    a_2&
    \cdots&
    a_n
\end{bmatrix}^T$ and  $\ket{i}=\ket{a_1}\otimes \ket{a_{2}} \cdots\otimes \ket{a_n}$.

The $n\times n$ identity matrix can be expressed as $I=\begin{bmatrix}
    \Vec{e}_{1}&
    \Vec{e}_{2}&
    \cdots&
    \Vec{e}_{n}
\end{bmatrix}$, where $\Vec{e}_{t}$ is a vector whose $t$-th element is 1, and the others are 0. These basis vectors play an important role in our method.

{\color{ black}
A function $f:\{0,1\}^n \rightarrow \{0,1\}^n$ is said to be reversible if it is a bijection. An $n$-bit reversible function can be represented by a $2^n\times 2^n$ permutation matrix $P$ such that  if  $f(i)=x_i$ for $ i \in \{0,1\}^n$, then we have $P\ket{i} = \ket{x_i}$.}

% {\color{ black}
% It should be noted that reversible circuits do not have to use only reversible logic gates. 
% Reversible circuits can be implemented using a variety of unitary gates; for example, some academics have observed that addition logical operations can be constructed using Quantum Fourier Transform (QFT) \cite{paler2022quantum}. Although they did not view this issue from the perspective of reversible circuits, the processing of this addition logic is a special type of reversible circuit.

% However, most research focuses on using reversible logic gates to build reversible circuits.  This is due to a number of factors.
% Firstly, using reversible logic gates to construct reversible circuits can be more straightforward and easier to understand.
% Secondly, using more unitary gates ($poly (n)$ types) will not improve in terms of magnitude (see Lemma \ref{lowerbound_lemma} for more details).
% Thirdly, a reversible circuit can always be split into the product of multiple reversible logic gates unless we opt to implement the corresponding permutation matrix in one go using the general unitary synthesis method (which costs $O(4^n)$ elementary gates).
% }

{\color{ black}
\begin{definition}\label{def1}
    For a given permutation matrix P, let $S_P=\{i : P\ket{i} \neq \ket{i}\}$ be the set of integers that is changed under  $P$, and $|S_P|$ be the number of elements in $S_P$.
\end{definition}

Obviously, for any $i$, if $i\in S_P$, then we have $x_i \in S_P$. 
% {\color{ black} }
Usually, $|S_P|$ is positively correlated with the cost of the circuit implementing $P$. 

\begin{lemma}\label{no0lemma}
    For a $2^n \times 2^n$ permutation matrix P, if $P\ket{0} \neq \ket{0}$, then we can use no more than n NOT gates to implement an operator $T$, such that $TP\ket{0} = \ket{0}$.
\end{lemma}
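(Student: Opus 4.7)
The plan is very short because the statement reduces to a direct construction. Since $P$ is a permutation matrix, $P\ket{0}$ must equal $\ket{x}$ for some computational basis state $\ket{x}$, and by hypothesis $x \neq 0$. Writing $x$ in its $n$-bit binary form $x = a_1 a_2 \cdots a_n$, the task becomes: build an operator $T$, using at most $n$ NOT gates, satisfying $T\ket{x} = \ket{0}$.

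The construction I would use is the obvious bit-wise one. For each index $j \in \{1,\dots,n\}$ with $a_j = 1$, place a single NOT gate on wire $j$; for indices with $a_j = 0$, do nothing. Call this circuit $T$. Since a NOT on wire $j$ acts as $X$ on the $j$-th tensor factor and as the identity on the other factors, applying $T$ to $\ket{x} = \ket{a_1}\otimes\ket{a_2}\otimes\cdots\otimes\ket{a_n}$ flips each $\ket{1}$ component to $\ket{0}$ and leaves each $\ket{0}$ component fixed, producing $\ket{0}^{\otimes n} = \ket{0}$. Therefore $TP\ket{0} = T\ket{x} = \ket{0}$, as required.

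Finally I would bound the gate count. The number of NOT gates in $T$ equals the Hamming weight of $x$, which is at most $n$ (and could be as small as $1$, since $x \neq 0$ forces at least one bit to be set). Hence $T$ uses no more than $n$ NOT gates, finishing the proof.

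There is essentially no obstacle: the only thing to notice is that because the target state $\ket{x}$ is a computational basis state, NOT gates acting independently on each wire suffice to drive it to $\ket{0}$, so no multi-qubit or controlled operations are needed. I would keep the write-up to a few lines.
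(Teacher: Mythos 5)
Your proposal is correct and is essentially identical to the paper's own proof: both apply a NOT gate on exactly those wires where the basis state $P\ket{0}$ has a $1$, giving at most $n$ NOT gates. No issues.
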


\begin{proof}
   If the $i$-th bit of $P\ket{0}$ is 1, then we apply a NOT gate on the $i$-th bit. Since it only has $n$ bits, it costs no more than $n$ NOT gates.
\end{proof}

This lemma makes sure that we don't need to consider $\ket{0}$, which is important in Section~\ref{sec:opt}.

%In this manuscript, we examine the synthesis of $n$-bit (or $n$-wire) reversible circuits without employing extra bits. 
% {\color{ black}A gate (circuit) is considered reversible if it realizes a reversible function. 
% A circuit with only  reversible gates  must be reversible, but the opposite is not always the case. For example, people will use AND gates to construct reversible circuit in classical case. Since we can not regain the input from the output of AND gate, it is not a reversible gate.
% }
%Consequently, a reversible circuit corresponds to a reversible function as well. Here, we present the formal definition of reversible circuits.

In this work, we consider how to implement a reversible function by a reversible circuit which is made of reversible gates. A reversible circuit has the same number of
input and output wires, and it is said to be $k$-bit or $k$-wire, if it has $k$ input and output wires.

%\begin{definition}
%\   An $n$-bit circuit is said to be  reversible, if it realizes a reversible function $f:\{0,1\}^n \rightarrow \{0,1\}^n$ and $\forall i \in \{0,1\}^n$, we define  $x_i=f(i)$. Also, we can use a $2^n \times 2^n$ permutation matrix P to represent it such that $P\ket{i} = \ket{x_i}$.
%\\end{definition}
}

Linear reversible  circuits form an important sub-class of reversible circuits, and here we will give their formal definition.
\begin{definition}
    A reversible function $f:\{0,1\}^n \rightarrow \{0,1\}^n$ is called linear   if $\forall x,y \in \{0, 1 \}^n $, $f(x\oplus y)=f(x)\oplus f(y)$, where $\oplus$ denotes bitwise XOR.  A  reversible circuit is called linear, if the function it realizes is linear.
\end{definition}
 
For an $n$-bit linear reversible function $f$ (or an $n$-bit linear reversible circuit), we can use an $n \times n$ reversible matrix $R$ over $ \mathbb{F}_2$ to represent it such that $R\Vec{i} =\Vec{f(i)} = \Vec{x}_{i}$. 
Since all linear reversible circuits can be constructed with CNOT gates alone and adding a small number of additional single and double qubit gates does not result in an order of magnitude improvement (as can be seen in the proof of the lower bound of reversible circuits), we typically only use CNOT gates to construct linear reversible circuits.
From Ref. \cite{patel2008optimal}, we know the following result.
\begin{lemma}\label{thorem:res}
Any $n$-bit linear reversible circuit (or $n \times n$ reversible matrix) can be constructed by using only CNOT gates and no more than $O(n^2/\log n)$ CNOT gates are required.
\end{lemma}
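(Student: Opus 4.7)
My plan is to translate the CNOT synthesis problem into a Gaussian-elimination problem over $\mathbb{F}_2$ and then apply a ``Four Russians'' style blocking trick to shave a logarithmic factor off the naive bound. First I would observe that a CNOT gate with control $i$ and target $j$ corresponds, on the $n \times n$ matrix representation of a linear reversible function, to the elementary row operation $R_j \mapsto R_j \oplus R_i$: its matrix over $\mathbb{F}_2$ is $I + E_{ji}$. Composing CNOTs corresponds to multiplying such elementary matrices, so synthesizing an arbitrary $R \in GL_n(\mathbb{F}_2)$ is equivalent to reducing $R$ to the identity by a sequence of these row operations and then playing the sequence back in reverse. The CNOT count equals the number of row operations used, so the lemma reduces to the purely combinatorial claim that any invertible matrix over $\mathbb{F}_2$ can be row-reduced to $I$ using $O(n^2/\log n)$ row-XORs.

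To beat the naive $O(n^2)$ Gaussian-elimination bound, I would partition the $n$ columns into $\lceil n/m \rceil$ consecutive blocks of width $m = \lfloor \log n \rfloor$ and process the blocks from left to right, running two passes on each. The first is a ``pattern collapse'' pass: look at the $m$-bit sub-row patterns of the rows that are not yet pivoted in the current block's columns; since only $2^m = n$ such patterns exist, one can group rows sharing a pattern and use one representative per group to zero out the others with a single row-XOR each, costing $O(n)$ operations for the block. The second is a standard Gaussian-elimination pass over just the $m$ columns of the block, which installs pivots on the diagonal and clears the remaining 1's in those columns. After the lower-triangular sweep is complete, I would run the symmetric version (equivalently, the same algorithm on the transpose) to clear the upper-triangular part and finish reducing $R$ to the identity.

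The main obstacle is the amortized cost analysis—originally carried out by Patel, Markov and Hayes—that shows the two passes combined cost $O(n)$ CNOTs per block, yielding $O(n/m \cdot n) = O(n^2/\log n)$ in total. The delicate point is that after the pattern-collapse pass has eliminated duplicate sub-rows, the subsequent Gaussian elimination inside the block operates on a structured remainder where the column work amortizes correctly; one must also verify that operations performed inside a block do not reintroduce duplicates in the blocks already processed, which requires tracking row operations only in the current and subsequent blocks. Granting this amortized analysis, both the lower- and upper-triangular sweeps stay within the $O(n^2/\log n)$ budget, which combined with the reduction in the first paragraph proves the lemma.
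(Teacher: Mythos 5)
The paper never proves this lemma itself: it is imported verbatim from Patel, Markov and Hayes \cite{patel2008optimal}, and your proposal is essentially a reconstruction of that very algorithm (CNOTs as elementary row additions over $\mathbb{F}_2$, block-wise Gaussian elimination with a ``Four Russians'' pattern-collapse step, lower- then upper-triangular sweep), with the decisive amortized count explicitly ``granted'' to the same reference. So in substance you follow the same route as the paper's citation. One concrete slip worth fixing: with block width $m=\lfloor\log_2 n\rfloor$ you have $2^m\approx n$, so the collapse pass may eliminate nothing (all $n$ sub-row patterns in a block can be distinct) and the in-block elimination pass can then cost $\Theta(nm)=\Theta(n\log n)$ row operations, which over $n/m$ blocks gives only $\Theta(n^2)$, not $O(n^2/\log n)$. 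The Patel--Markov--Hayes analysis needs block width $m=\alpha\log_2 n$ with $\alpha<1$ (e.g.\ $m\approx(\log_2 n)/2$): after collapsing duplicates at most $2^m-1$ rows remain nonzero inside the block, so the in-block Gaussian pass costs at most $2^m m=n^{\alpha}\,\alpha\log_2 n=o(n)$ operations, the collapse pass costs at most $n$, hence $n+o(n)$ per block and $O(n^2/\log n)$ in total. With that parameter correction (and the routine observation that row operations among rows whose entries in already-processed blocks are zero cannot disturb those blocks), your outline matches the cited proof.
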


One straightforward approach is to leverage the outcomes of linear reversible circuits to optimize reversible circuits. However, the actual implementation is not as intuitive. We need to introduce certain techniques. In Section~\ref{sec:opt}, we shall show how to use linear reversible circuits to help construct reversible circuits.

\begin{definition}\label{CNOT}
    A C$^m$NOT gate denotes a NOT gate with $m$ control bits:  if the $m$ control bits are all 1, then the target bit flips. Otherwise, leaves the target bit unchanged.
\end{definition}
% C$^m$NOT gates are also known as reversible logic gates, 
For $m =0, 1, 2$, the gates are NOT, CNOT, Toffoli, respectively. The three gates are the most common reversible gates and they compose the  NCT gate library, from  which one can construct any $n$-bit reversible circuit with at most one 
 extra C$^{n-1}$NOT gate \cite{desoete2002reversible}.

\begin{lemma}\label{costofCNOT}
    In an $n$-bit circuit, the cost of C$^m$NOT gate is linear with $m$ except for $m=n-1$. The cost of C$^{n-1}$NOT gate is $O(n^2)$. 
\end{lemma}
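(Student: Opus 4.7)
The plan is to handle the two regimes of $m$ separately, relying on the classical dirty-ancilla decomposition of multi-controlled NOT gates.

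When $m \le n-2$, the $C^m$NOT gate acts on only $m+1 \le n-1$ of the $n$ wires, so at least one wire is left untouched by the gate; I would use such a wire as a \emph{dirty ancilla}, whose initial value is unknown but is guaranteed to be restored at the end of the construction. The standard trick is to split the $m$ control bits into two halves, compute the conjunction of the first half into the ancilla by a staircase of Toffoli gates, then use the ancilla together with the second half to toggle the target, and finally uncompute the staircase. Each block uses $O(m)$ Toffoli gates, so the total cost from the NCT library is $O(m)$, which is linear in $m$ as claimed.

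For $m = n-1$ every wire already participates in the gate and no free wire is available as an ancilla. I would instead temporarily \emph{borrow} one of the control wires (or the target) as a dirty ancilla, and use the identity that expresses $C^{n-1}$NOT as a product of a constant number of multi-controlled gates, each involving only a proper subset of the $n$ wires (typically with roughly $n/2$ controls). For each sub-gate there is now a free wire in the complementary subset, so the previous case applies and the sub-gate costs $O(n)$. Composing a constant number of such blocks yields the desired $O(n^2)$ bound.

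The main obstacle is the correctness of the dirty-ancilla step: I must verify that running the staircase subroutine forward and then backward acts as the identity on the ancilla wire, regardless of its unknown initial value. This reduces to observing that each Toffoli in the staircase XORs into the ancilla a fixed Boolean function of the other wires, so two identical staircases cancel modulo $2$ and restore the ancilla exactly. Once that is in hand, the gate counting in the linear case and the verification of the outer decomposition in the $m=n-1$ case are routine and essentially follow the constructions already documented in the literature.
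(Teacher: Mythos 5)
Your first case ($m\le n-2$) is essentially the paper's own argument: invoke the standard Barenco-et-al.\ dirty-ancilla decomposition of a C$^m$NOT into $O(m)$ Toffoli gates, each Toffoli costing $O(1)$ elementary gates. (One small imprecision there: with a single borrowed wire you cannot compute the AND of half the controls by a plain Toffoli staircase — that staircase itself needs further dirty ancillas, which in the standard construction are taken from the wires of the \emph{other} half of the controls; this is routine, but your description glosses over it.)

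The case $m=n-1$ contains a genuine gap: the identity you appeal to does not exist. On $n$ wires, C$^{n-1}$NOT is a single transposition of computational basis states, hence an \emph{odd} permutation, whereas every C$^{m}$NOT with $m\le n-2$ embedded in an $n$-wire circuit flips the target on $2^{\,n-m-1}\ge 2$ input patterns and is therefore an \emph{even} permutation (likewise NOT, CNOT and Toffoli for $n\ge 4$). Consequently no product of strictly smaller multi-controlled NOTs — constant-length or otherwise — equals C$^{n-1}$NOT on $n$ wires, and ``borrowing'' a control or the target as a dirty ancilla cannot rescue the construction, because the borrowed wire is still acted on nontrivially by the very gate you are synthesizing. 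This parity obstruction is precisely why the paper's gate library (Remark~\ref{gate_remark}) includes genuinely quantum gates such as controlled-$V_i$ with $V_i^{2^i}=NOT$ and controlled-$R_x(\theta)$: the ancilla-free $O(n^2)$ decompositions the paper relies on (the Barenco-et-al.\ recursion through controlled square roots of NOT, and the construction of the cited reference for the $O(n^2)$ count) use these gates, and the paper simply quotes that bound. Note also that your accounting is internally inconsistent even granting the decomposition: a constant number of blocks each of cost $O(n)$ would yield $O(n)$, not the claimed $O(n^2)$; the quadratic cost in the correct constructions comes from a recursion of depth $\Theta(n)$ with $O(n)$-cost stages.
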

\begin{proof}
    For $n\geq 5$ and $m \in \{3,4,\cdots, n-2 \}$, a C$^m$NOT gate can be implemented by using a linear cost of Toffoli gates \cite{barenco1995elementary}. And each Toffoli gate can be constructed in a constant cost with gate library $\{NOT, S, T, H, CNOT\}$.

    For C$^{n-1}$NOT gate, the best cost so far is $O(n^2)$ elementary gates \cite{da2022linear}.
    % We need $O(2^n)$ operations with our gate library \cite{barenco1995elementary}.
\end{proof}

If we desire a C$^m$NOT gate that effectively permits a control bit to assume the value 0, we simply incorporate a single NOT gate prior to and another following the C$^m$NOT gate. These are known as general C$^m$NOT gates (or MCT gate), and the added expense amounts to no more than $2m$ elementary gates. 
% The compendium of gate resources employed in this paper consists of a combination of \cite{maslov2008quantum} and \cite{da2022linear}. It is worth noting that our findings remain consistent even when utilizing alternative gate compilations, albeit with an associated multiplicative factor.

The fundamental concept underlying our synthesis methodology is to disassemble a reversible circuit into these versatile C$^m$NOT gates. By quantifying the known expense of these gates, we can assess the overall cost of the circuit.

\begin{definition}
  When we use $q_i$($\Bar{q_i}$) to denote the control bit,  the controlled gate works only when the $i$-th bit is $1 (0)$.
\end{definition}

\begin{lemma}\label{lemma2.2}
   Given numbers $a, b\in\{0, 1\}^n$ with $a \neq b$, we can use no more than $O(n^2)$ elementary gates to construct an operator $T$ such that
    \begin{equation}
        T\ket{x}=\left\{
        \begin{array}{rcl}
            \ket{b}, & & x = a\\
            \ket{a}, & & x = b\\
            \ket{x}, & & else
         \end{array} \right.
    \end{equation}
\end{lemma}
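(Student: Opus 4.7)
The plan is to reduce the desired transposition $(\ket{a},\ket{b})$ to a transposition of the form $(\ket{0^n},\ket{e_k})$ by conjugating with a cheap linear reversible layer built from NOT and CNOT gates. Because any such conjugating layer is a bijection on $\{0,1\}^n$, the resulting circuit will automatically act as the identity on every basis state outside $\{a,b\}$; the only real work is to build the reduction cheaply and to pay once for a single multi-controlled core gate.

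First I would apply a layer $N$ consisting of a NOT on exactly the bits $i$ with $a_i = 1$. This uses at most $n$ gates and sends $\ket{a}\mapsto \ket{0^n}$ while $\ket{b}\mapsto \ket{a\oplus b}$. Since $a\neq b$, the string $a\oplus b$ is nonzero, so I can pick any index $k$ with $(a\oplus b)_k = 1$. Next, for each $j\neq k$ with $(a\oplus b)_j = 1$, apply a CNOT controlled on bit $k$ with target bit $j$; call this layer $C$, which uses at most $n-1$ CNOTs. Because bit $k$ of $0^n$ is $0$, the state $\ket{0^n}$ is fixed by $C$; because bit $k$ of $a\oplus b$ is $1$, these CNOTs clear every other nonzero bit and send $\ket{a\oplus b}\mapsto \ket{e_k}$.

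At this point I would define $T = NCMCN$, where $M$ is the generalized $C^{n-1}\mathrm{NOT}$ that targets bit $k$ and fires exactly when the other $n-1$ bits are all $0$. The gate $M$ swaps $\ket{0^n}$ with $\ket{e_k}$ and fixes all other basis states, and since $NC = (CN)^{-1}$, the operator $T$ swaps $\ket{a}$ with $\ket{b}$. By Lemma \ref{costofCNOT} together with the remark on generalized multi-controlled gates, $M$ costs $O(n^2)$ elementary gates, while the two $N$ layers and two $C$ layers contribute $O(n)$ gates in total, giving an overall count of $O(n^2)$.

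The subtle part is not the counting but the verification that $T$ acts trivially outside $\{a,b\}$: since $CN$ is a permutation of $\{0,1\}^n$ sending $a\mapsto 0^n$ and $b\mapsto e_k$ (and only those), for every other $x$ the image $CN\ket{x}$ lies outside $\{\ket{0^n},\ket{e_k}\}$, so $M$ fixes it and the outer $NC$ returns it to $\ket{x}$. The dominant expense is the single generalized $C^{n-1}\mathrm{NOT}$, which I expect to be the main obstacle to improving beyond $O(n^2)$ within this conjugation approach; a naive alternative that decomposed the swap into a chain of Hamming-distance-one transpositions would incur a $C^{n-1}\mathrm{NOT}$ per step and blow up to $O(n^3)$, so the CNOT-based compression of $a\oplus b$ to $e_k$ is what keeps the bound at $O(n^2)$.
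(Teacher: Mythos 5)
Your proposal is correct and follows essentially the same route as the paper: conjugate by a cheap layer of CNOTs (the paper uses CNOTs controlled on a bit where $a$ and $b$ differ; you additionally normalize $a$ to $0^n$ with NOTs) so that the transposition reduces to one realized by a single generalized C$^{n-1}$NOT, whose $O(n^2)$ cost dominates the $O(n)$ conjugating layers. The only difference is cosmetic: the paper absorbs the bits where $a$ and $b$ agree into the mixed-polarity controls of the general C$^{n-1}$NOT, whereas you clear them with an explicit NOT layer.
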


\begin{proof}
    Let $\Vec{c} = \Vec{a} \oplus \Vec{b}$, and $c_i$ be the $i$-th element of $\Vec{c}$. Since $a \neq b$, there exists $t \in \{1, 2, \cdots, n\}$ such that $c_t = 1$.

    For each $s\in \{1, 2, \cdots, n\}$ such that $c_s = 1$ and $s \neq t$, apply a CNOT gate on $q_s$, whose control bit is $q_t$. It's obvious that it will not cost more than $n-1$ CNOT gates. After these gates, $a$ and $b$ only differ in $q_t$. 
    {\color{ black}$q_t$ should be applied with a general C$^ {n-1}$NOT gate, using the remaining bits as control bits.}
    % Apply a C$^{n-1}$NOT gate on $q_t$, with the rest bits being the control bits. 
    Then apply the CNOT gates above in the reverse order, obtaining the operator $T$.

    The total cost is $O(n)+O(n^2)=O(n^2)$.
\end{proof}

For each $i\in S_P$, we can construct a $T$ which swaps $i$, $x_i$ and leaves the others unchanged. Then we have $TP\ket{i} = \ket{i}$ and $TP\ket{j} = \ket{j}$  for any $j$ satisfying $P\ket{j}=\ket{j}$. If we can reduce $|S_P|$ to small enough, then the gate complexity of implementing $T$ can be neglected in the synthesis process.

\begin{definition}
 An $n$-wire circuit is said to be one-gate, if it consists of only one elementary gate.
\end{definition}

For the convenience of readers, we recall the lower bound of reversible logic synthesis proved in \cite{shende2003synthesis}.

\begin{lemma}
    \label{lowerbound_lemma}
   Given an elementary gate library $L$, if the number of one-gate circuits on $n$ wires generated by $L$ is polynomial with $n$,  then in the worst case  an $n$-bit  reversible circuits requires $ \Omega(2^n n/\log n)$ gates from $L$.
\end{lemma}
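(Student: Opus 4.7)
The plan is to establish this bound by a standard counting/pigeonhole argument. First I would observe that the number of distinct $n$-bit reversible functions equals the number of permutations of $\{0,1\}^n$, namely $(2^n)!$. Any circuit built from $L$ using at most $k$ gates can be specified by an ordered sequence of at most $k$ one-gate circuits on $n$ wires; by the hypothesis of the lemma, the number of such one-gate circuits is bounded by some polynomial $p(n)$, so the total number of distinct circuits of size at most $k$ is at most
\begin{equation}
\sum_{i=0}^{k} p(n)^i \;\leq\; (k+1)\,p(n)^k.
\end{equation}

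Next I would apply a pigeonhole argument: if every $n$-bit reversible function were realizable by some circuit of size at most $k$, then the count of circuits must dominate the count of functions, giving $(k+1)\,p(n)^k \geq (2^n)!$. Taking logarithms and invoking Stirling's formula yields
\begin{equation}
\log\bigl((2^n)!\bigr) \;=\; \Theta(n\,2^n),
\end{equation}
so $k\log p(n) + \log(k+1) = \Omega(n\,2^n)$. The $\log(k+1)$ term is negligible once $k$ is at least linear in $n$ (which any candidate lower bound of this magnitude already satisfies), and therefore $k = \Omega(n\,2^n/\log p(n))$. Since $p(n)$ is polynomial in $n$, we have $\log p(n) = O(\log n)$, which gives $k = \Omega(2^n n/\log n)$. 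Consequently there exists at least one reversible function whose minimum circuit size over $L$ meets this bound, which is exactly the worst-case claim.

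I do not expect any serious obstacle here: the argument is a clean information-theoretic counting bound, and the only delicate point is to make sure that positioning of a gate on specific wires is already absorbed into the count $p(n)$ of one-gate $n$-wire circuits, so that the number of length-$k$ gate sequences is genuinely at most $p(n)^k$. The remaining ingredients (Stirling's estimate, the polynomial hypothesis on $p(n)$) then combine routinely to yield the stated $\Omega(2^n n/\log n)$ bound.
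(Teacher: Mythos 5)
Your proposal is correct and follows essentially the same counting/pigeonhole argument as the paper: bound the number of size-$k$ circuits by (a polynomial)$^k$, compare with $(2^n)!$, and use $\log\bigl((2^n)!\bigr)=\Theta(n2^n)$ together with $\log p(n)=O(\log n)$ to conclude $k=\Omega(2^n n/\log n)$. The only cosmetic difference is that you explicitly sum over circuit sizes up to $k$ and handle the $\log(k+1)$ term, while the paper absorbs this into the bound $b^d\geq(2^n)!$; the substance is identical.
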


\begin{proof}
    It's a special case of Lemma 8 in \cite{shende2003synthesis}.

  {\color{ black}
  Let $b$ be the number of one-gate circuits on $n$ wires generated by $L$. 
  Let $d$ be the gate number of the largest $n$-bit  reversible circuit generated by $L$.
  At most $b^d $  matrices can be generated by such a circuit.
Let $M$ be the number of  all the $n$-bit reversible circuits with  $d$  gates. Then we have
    \begin{equation}
        b^d\geq M \geq (2^n)!  \ ,
    \end{equation}
    where $(2^n)!$ is the number of all the $n$-bit reversible functions. Therefore,
    \begin{equation}
        d\geq \sum_{i=1}^{2^n}{\log_b i} = \frac{1}{\ln b}\sum_{i=1}^{2^n}{\ln i}  \ .
    \end{equation}
    Since $b=poly(n)$, we have $d = \Omega(2^n n/\log n)$. 
    }
\end{proof}

\begin{remark}
Note that any  gate library $L$ with  $poly(n)$   one-qubit gates and two-qubit gates   satisfies the requirement in  Lemma \ref{lowerbound_lemma}. In addition, the lower bound $ \Omega(2^n n/\log n)$ still holds when  $poly(n)$ ancillary qubits are allowed. 
\end{remark}

\section{Asymptotically Optimal Synthesis}\label{sec:opt}

In this section,  we are going to prove Theorem \ref{main-theorem}. More specifically, we will prove that for an arbitrarily given $2^n\times 2^n$ permutation matrix $P$ or  $n$-bit reversible function $f$, we can implement it by a reversible circuit with no more than $O(2^n n/\log n)$ elementary gates.

%As mention before, we will show how to reduce the number of inputs $x$ such that $f(x)\neq x $. To this end, we will construct a new bijective function $g$, such that the number of inputs $x$ such that $g(f(x)) \neq x$ is small enough (no more than $2^{n/2-1}$). The circuit size of $g(f(x))$ will be really small, and  the cost of $g$ will be no more than $O(2^n n/\log n)$.

In the following, we should remember that $P$ is a $2^n\times 2^n$ permutation matrix, $f$ means the reversible function corresponding to $P$ and $S_P=\{i : P\ket{i}\neq \ket{i}\}$. Also note that all matrices and vectors are over $ \mathbb{F}_2$.

We will construct a series of $Q_1, Q_2,\cdots, Q_d$ such  that  $|S_{ Q_d\cdots Q_2 Q_1P}| \leq 2^{n/2-1}$, where $d$ is no more than $O(2^n/n)$ and each $Q_i$ costs no more than $O(n^2/\log n)$ gates. This result is formally stated in Theorem \ref{mainswap}, which leads to the main result---Theorem \ref{maintheorem}.  Before proving it, we need several technical lemmas  showing how to construct  $Q_i$.
For the reader's convenience, the relationship among the lemmas and theorems obtained in this paper is shown in Figure \ref{p_lemma}.

\begin{figure}[h]
    \centering
    \includegraphics[width=0.8\textwidth]{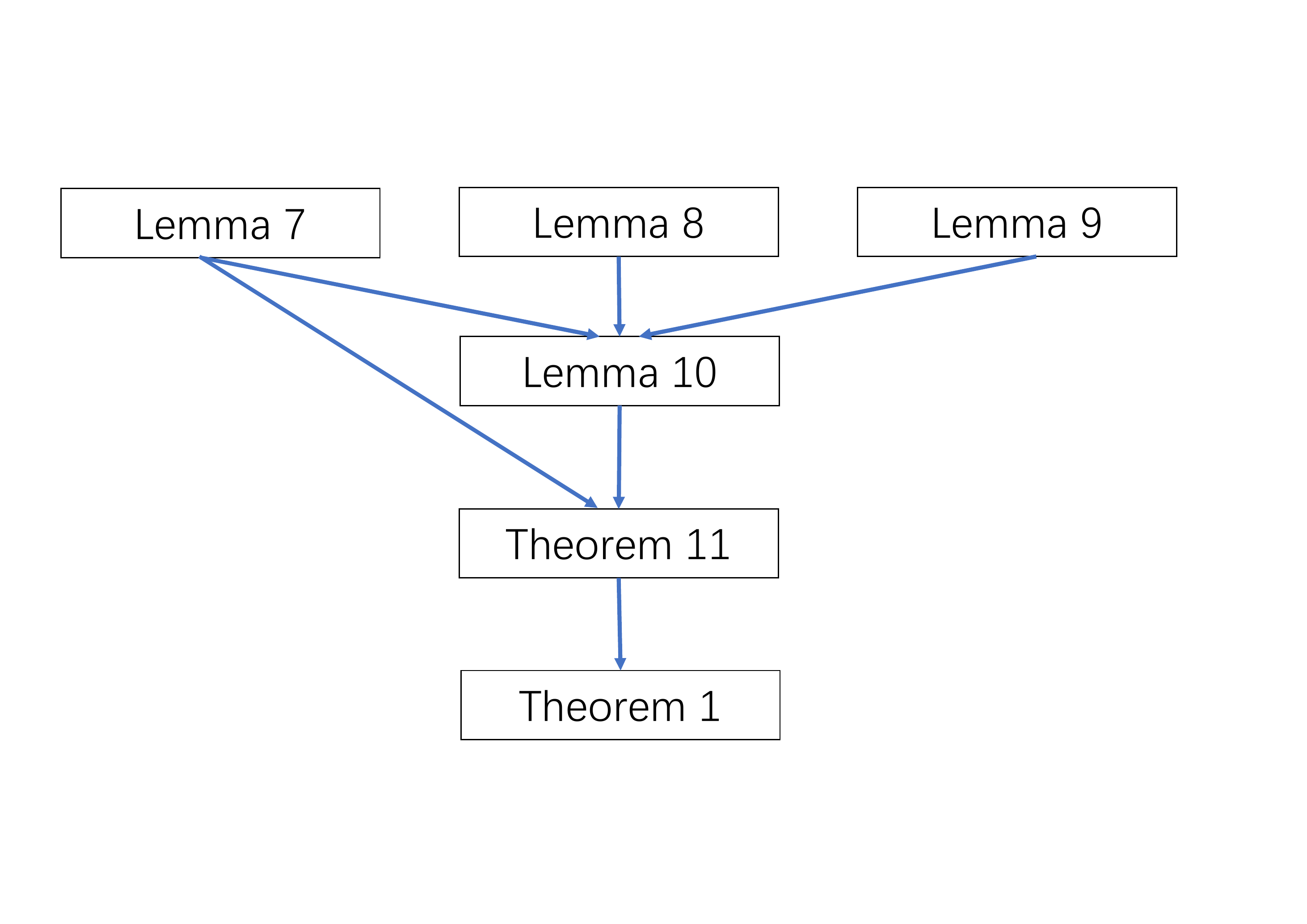}
    \caption{ Relationship among the lemmas and theorems obtained in this paper. }
    \label{p_lemma}
\end{figure}

\begin{lemma}\label{lemma-matrix}
Let $m=\lfloor\log_2n\rfloor-1$. When $|S_P|>2^{n/2-1}$,  $2^{m-1}$ pairs of $(i, x_i)$ from $S_P$ can be selected to construct $V=
\begin{bmatrix}
    \Vec{i}_{1}&
    \Vec{x}_{i_1}&
    \Vec{i}_{2}&
    \Vec{x}_{i_2}&
    \cdots&
    \Vec{i}_{2^{m-1}}&
    \Vec{x}_{i_{2^{m-1}}}
\end{bmatrix}
$ and
$
U=
\begin{bmatrix}
    \Vec{i}_{1}&
    \Vec{z}_{1}&
    \Vec{i}_{2}&
    \Vec{z}_{2}&
    \cdots&
    \Vec{i}_{2^{m-1}}&
    \Vec{z}_{2^{m-1}}
\end{bmatrix}$, such that the columns of $U$ are linearly independent, with $\Vec{z}_{1}=\Vec{x}_{i_1}$, and for $ t \in \{1, 2,\cdots, 2^{m-1}\}$,  neither $v_{2t-1}$ nor $v_{2t}$ is a linear combination of $\{u_1, u_2, \cdots, u_{2t-2}\}$, where $u_t(v_t)$ denotes the $t$-th column of $U(V)$.
\end{lemma}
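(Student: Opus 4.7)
The plan is to build the $2^{m-1}$ pairs and the columns of $U$ inductively, greedily extending the construction one pair at a time. The invariant maintained at the start of step $t$ is that $u_1,\ldots,u_{2t-2}$ have already been chosen, are linearly independent over $\mathbb{F}_2$, and span a subspace $W_t$ of dimension $2t-2$. At step $t$, I would first locate a pair $(i_t,x_{i_t})\in S_P\times S_P$ (with $x_{i_t}=f(i_t)$) such that both $\Vec{i}_t$ and $\Vec{x}_{i_t}$ lie outside $W_t$, then set $u_{2t-1}=v_{2t-1}=\Vec{i}_t$, $v_{2t}=\Vec{x}_{i_t}$, and finally pick $u_{2t}=\Vec{z}_t$ so that the enlarged column set remains linearly independent.

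The base case is easy: $W_1=\{\Vec{0}\}$, so any $i_1\in S_P$ satisfies $\Vec{i}_1\neq\Vec{0}$ and $\Vec{x}_{i_1}\neq\Vec{0}$ (both by virtue of $S_P$ not containing $0$, using Lemma~\ref{no0lemma} as a WLOG preprocessing); setting $\Vec{z}_1:=\Vec{x}_{i_1}$ then makes $u_1,u_2$ distinct nonzero, hence linearly independent over $\mathbb{F}_2$. For the inductive step $t\geq 2$, once a valid $i_t$ is in hand, the new column $u_{2t-1}=\Vec{i}_t$ preserves independence by the avoidance condition, and $\Vec{z}_t$ can be chosen as any vector outside $\mathrm{span}(u_1,\ldots,u_{2t-1})$ — a subspace of dimension $2t-1\leq 2^{m}-1 < n$, hence proper.

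The heart of the argument is a counting estimate guaranteeing that a valid $i_t$ always exists. The set of "bad" indices in $S_P$ splits into two kinds: those $i$ with $\Vec{i}\in W_t$, and those with $\Vec{x}_i\in W_t$. The first count is at most $|W_t|=2^{2t-2}$; since $f$ restricts to a permutation of $S_P$ (as already noted after Definition~\ref{def1}), the second count is also bounded by $2^{2t-2}$. Thus the total bad count is at most $2^{2t-1}$, and what remains to verify is
\begin{equation*}
2^{2t-1}\;\leq\;2^{2^{m}-1}\;\leq\;2^{n/2-1}\;<\;|S_P|,
\end{equation*}
where the first inequality uses $t\leq 2^{m-1}$, the second uses $2^{m}\leq n/2$ (from $m=\lfloor\log_2 n\rfloor-1$), and the third is the hypothesis of the lemma. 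Hence a good $i_t$ always remains.

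The main obstacle is not conceptual but rather the parameter bookkeeping: one must propagate the independence invariant cleanly across all $2^{m-1}$ rounds, handle the $t=1$ special case where $\Vec{z}_1$ is forced to equal $\Vec{x}_{i_1}$ rather than freely chosen, and confirm that the particular value $m=\lfloor\log_2 n\rfloor-1$ is tuned precisely so that the counting budget $2^{2^{m}-1}$ stays below the hypothesized lower bound $2^{n/2-1}$ on $|S_P|$. With that in place, the greedy construction runs for all $2^{m-1}$ steps and outputs the required $U$ and $V$.
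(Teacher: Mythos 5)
Your proposal is correct and follows essentially the same greedy route as the paper: select each pair so that both $\Vec{i}_t$ and $\Vec{x}_{i_t}$ avoid the span of the current columns of $U$, patch in a fresh $\Vec{z}_t$ when $\Vec{x}_{i_t}$ is dependent, and use the counting bound $2\cdot 2^{2t-2}=2^{2t-1}\leq 2^{2^m-1}\leq 2^{n/2-1}<|S_P|$ (the paper counts excluded pairs $(u,f(u))$ and $(f^{-1}(u),u)$, which is the same estimate as your two kinds of bad indices). Your added remarks — the $t=1$ case via distinct nonzero vectors over $\mathbb{F}_2$ and the existence of $\Vec{z}_t$ because $2t-1<n$ — are consistent with, and slightly more explicit than, the paper's argument.
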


\begin{proof}
At the beginning, let $U = \begin{bmatrix}& \end{bmatrix}$, $V = \begin{bmatrix}& \end{bmatrix}$. Then we add vectors into the two matrices from left to right by repeating the following iteration $2^{m-1}$ times. 
    
In the $t$-th iteration ($t=1,2,\cdots, 2^{m-1}$), we take the following three steps. 
\begin{enumerate}[(1)]
    \item 
    Select a pair $(i_t, x_{i_t})$ from $S_P$  such that  both $\Vec{i}_{t}$ and $\Vec{x}_{i_t}$  are  not a linear combination of the columns of $U$ (it  will be proved  later such a pair can always be found if $|S_P|>2^{n/2-1}$).

    \item 
    Append $\Vec{i}_{t}$ and $\Vec{x}_{i_t}$  to $V$.

    \item 
    Append $\Vec{i}_{t}$ to $U$ and check whether  the columns of  the obtained $U$ together with  $\Vec{x}_{i_t}$ are linearly independent. If yes, then append $\Vec{x}_{i_t}$ to $U$. Otherwise, select a vector $l$   such that the the columns of $U$ together with $l$ are linearly independent and then append it to $U$.
    
\end{enumerate}

Now we explain why the pair $(i_t, x_{i_t})$ in the first step above can always be found. 
When we are selecting the $t$-th pair $(i_t, x_{i_t})$, it means  $t-1$ pairs have been selected and  there are $2(t-1)$ columns in  $U$. Thus, the number of all the linear combinations of the columns of $U$ is $2^{2t-2}$, and each  vector $\Vec{u}$ from this set will lead to that $(u, f(u))$, $(f^{-1}(u),u)$ can not be selected. Thus, there are at most $2\times2^{2t-2}=2^{2t-1}$ pairs that can not be the candidate for the $t$-th pair.  Therefore, when $|S_P|>2^{2t-1}$, the $t$-th pair $(i_t, x_{i_t})$ can be selected. Furthermore,  when $|S_P|>2^{n/2-1}\geq 2^{2^m-1}$, $2^{m-1}$ pairs of $(i, x_i)$ can always be found.
    
\end{proof}
Based on the construction of $V$, it has the following properties:
\begin{enumerate}[(1)]
\item $\forall t \in \{1,2,3,5,7,\cdots, 2^{m}-1\}$, $v_t=u_t$.

\item $\forall t \in \{4,6,8,\cdots, 2^m\}$, 
    either $v_t=u_t$ or $v_t =u_{t-1} \oplus \sum_{j=1}^{t-2}{k_j u_j}$,  {\color{ black}where at least one $k_j$ is equal to 1, with each $k_j \in \{0, 1\}$}.

%if $v_t\neq u_t$, then $v_t =u_{t-1} \oplus \sum_{j=1}^{t-2}{k_j u_j}$  for at least one non-zero $k_j$.
\end{enumerate}
Here we explain item (2). 
{\color{ black}Lemma \ref{lemma-matrix} ensures that $\{u_1, u_2, \cdots, u_{t-2}, v_{t-1}\}$ and $\{u_1, u_2, \cdots, u_{t-2}, v_t\}$ are both linearly independent sets. If $v_t\neq u_t$, then $\{u_1, u_2, \cdots, u_{t-2}, u_{t-1},v_t\}$ are not linearly independent set.
Then we have the fact that $v_t \oplus u_{t-1}=v_t \oplus v_{t-1}=\sum_{j=1}^{t-2}{k_j u_j}$ and at least one $k_j$ equals 1. }
Thus, by adding $u_{t-1}$ to the two sides of the equation, we get item (2).

Furthermore, it's easy to augment $U$ to an $n \times n$ linearly independent matrix over $ \mathbb{F}_2$ by adding some columns. Correspondingly,   $V$  is also augmented by 
adding the same columns. The augmented matrices are still denoted by $U$ and $V$.

\begin{lemma}\label{lemmaUV}
    Once we get $U, V$ above, we can construct an $n \times n $ reversible matrix $R_1$  over $ \mathbb{F}_2$, such that $R_1 U=I$. 
    As a result,  $V':=R_1 V$ is an upper triangular matrix having  the following properties:
    \begin{enumerate}[(1)]
    \item  $\forall t \in \{1,2,3,5,7,\cdots, 2^{m}-1\}$, $v'_t=\Vec{e}_{t} $, where $v'_t$ is the $t$-th column vector of $V'$.

     \item   $\forall t \in \{4,6,8,\cdots, 2^m\}$, if $v_t = u_t$, then $v'_t=\Vec{e}_t$; otherwise, $v'_t=\Vec{e}_{t-1} \oplus \sum_{j=1}^{t-2}{k_j \Vec{e}_{j}}$,  {\color{ black} where $k_j$ belongs to $\{0, 1\}$ and at least one $k_j$ equals 1}.
    \end{enumerate}
   
\end{lemma}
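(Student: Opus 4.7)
The plan is to observe that once $U$ has been augmented to a full-rank $n \times n$ matrix over $\mathbb{F}_2$, it is invertible, so the natural choice is simply $R_1 := U^{-1}$. This is an $n \times n$ reversible matrix over $\mathbb{F}_2$, and it trivially satisfies $R_1 U = I$. The rest of the lemma then reduces to analyzing $V' = U^{-1} V$ column by column, using the linearity of left-multiplication by $R_1$.

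First I would record the key identity $R_1 u_j = \vec{e}_j$ for every $j \in \{1, \ldots, n\}$, which holds simply because $u_j$ is the $j$-th column of $U$. Then for each column $v_t$ of $V$, the column $v'_t = R_1 v_t$ is obtained by rewriting $v_t$ in the basis $\{u_1, \ldots, u_n\}$ and replacing each $u_j$ by $\vec{e}_j$. I would then apply this to the three cases coming from the structural properties of $V$ listed immediately before the lemma statement. For $t \in \{1, 2, 3, 5, 7, \ldots, 2^m - 1\}$, property (1) says $v_t = u_t$, so $v'_t = \vec{e}_t$. For $t \in \{4, 6, 8, \ldots, 2^m\}$, property (2) gives either $v_t = u_t$ (and again $v'_t = \vec{e}_t$) or $v_t = u_{t-1} \oplus \sum_{j=1}^{t-2} k_j u_j$, which after applying $R_1$ becomes $v'_t = \vec{e}_{t-1} \oplus \sum_{j=1}^{t-2} k_j \vec{e}_j$, with the same coefficients $k_j$ and the same guarantee that at least one $k_j$ equals $1$. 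For the augmented columns $t > 2^m$, the construction takes $v_t = u_t$, so once more $v'_t = \vec{e}_t$.

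Finally, I would verify that $V'$ is upper triangular by checking that the support of every column $v'_t$ is contained in row indices $\{1, \ldots, t\}$. Columns of the form $\vec{e}_t$ contribute only at row $t$, and columns of the form $\vec{e}_{t-1} \oplus \sum_{j=1}^{t-2} k_j \vec{e}_j$ contribute only at rows in $\{1, \ldots, t-1\}$, both of which lie in $\{1, \ldots, t\}$, so $V'$ is upper triangular as claimed.

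There is no real obstacle here: once one notices that $R_1$ may be taken to be $U^{-1}$, everything follows from linearity and bookkeeping. The only thing requiring a bit of care is verifying that the coefficient structure from property (2) of $V$ is preserved verbatim --- in particular that the distinguished term $u_{t-1}$ maps to $\vec{e}_{t-1}$ --- and that the augmented columns indeed keep $V'$ upper triangular rather than disturbing the pattern.
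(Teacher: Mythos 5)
Your proposal is correct and follows essentially the same route as the paper: take $R_1$ with $R_1 U = I$ (i.e.\ $R_1 = U^{-1}$), use $R_1 u_j = \Vec{e}_j$ together with the structural properties of $V$ to read off the columns of $V'$, including the augmented columns with $v_t = u_t$, and conclude upper triangularity from the column supports. No gaps.
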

 
\begin{proof}
   First it is easy to  get item (1) from the fact that $R_1 u_t=\Vec{e}_{t}$.
    
    Because of the property (2) of $V$, $\forall t \in \{4,6,8\cdots, 2^m\}$, if $v_t\neq u_t$, then $R_1 v_t = R_1 u_{t-1} \oplus \sum_{j=1}^{t-2}{k_j R_1 u_j}=\Vec{e}_{t-1} \oplus \sum_{j=1}^{t-2}{k_j \Vec{e}_{j}}$. Therefore, item  (2) holds. It's easy to check that the $k_j$ of $v'_t$ in property (2) is the $j$-th element of $v'_t$.
    {\color{ black}For $t \in \{2^{m} + 1, 2^m +2,\cdots, n\}$, we have $u_t=v_t$ and $R_1v_t = \Vec{e_t}$.}

    Based on these properties, we have $v'_{t',t}=0$ for any $t'>t$, where $v'_{i,j} $ means the $i$-th row and $j$-column element of $V'$. 
    Thus $V'$ is an upper triangular matrix. 
    
\end{proof}

\begin{lemma}\label{id-lemma}
    Once we get $V'$ from Lemma \ref{lemmaUV}, we can use no more than $O(n^2/\log n)$ elementary gates to transform $V'$ to the identity matrix $I$.
\end{lemma}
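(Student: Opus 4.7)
The plan is to reduce ``transforming $V'$ to $I$'' to the synthesis of a single $n \times n$ linear reversible function over $\mathbb{F}_2$, and then invoke Lemma \ref{thorem:res} to cap the CNOT cost at $O(n^2/\log n)$. The upshot is that the specific, highly structured $V'$ coming out of Lemma \ref{lemmaUV} is not inherently harder to synthesize than an arbitrary $\mathbb{F}_2$-reversible matrix, and the Patel--Markov--Hayes-type bound already suffices.

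First I would exploit the structure of $V'$ guaranteed by Lemma \ref{lemmaUV}: $V'$ is upper triangular with $v'_t = \vec{e}_t$ for every $t \in \{1,2,3,5,7,\ldots,2^m-1\} \cup \{2^m+1,\ldots,n\}$, and the only possibly non-trivial columns sit at even positions $t \in \{4,6,\ldots,2^m\}$, where $v'_t = \vec{e}_{t-1} \oplus \sum_{j=1}^{t-2} k_j \vec{e}_j$ (with $m = \lfloor \log_2 n \rfloor - 1$, so $2^m \le n$). All deviations from the identity therefore live inside the upper-left $2^m \times 2^m$ block, and the remaining $n - 2^m$ coordinates already match $I$ exactly. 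The required transformation is then a linear reversible operator $R_2$ over $\mathbb{F}_2$ which sends each column $v'_t$ of $V'$ to $\vec{e}_t$.

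Second, I would invoke Lemma \ref{thorem:res}: any $n \times n$ reversible matrix over $\mathbb{F}_2$ can be implemented by at most $O(n^2/\log n)$ CNOT gates. Applying this to $R_2$ and prepending the resulting CNOT subcircuit to $V'$ yields $R_2 V' = I$, which is exactly the claimed transformation. Because Lemma \ref{thorem:res} is gate-tight up to constants for generic $n \times n$ linear reversible maps, this step gives the stated $O(n^2/\log n)$ bound without needing any additional ingredients.

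The main obstacle I anticipate is verifying that $R_2$ can be taken to be a genuine invertible matrix on $\mathbb{F}_2^n$: in the ``otherwise'' case the description $v'_t = \vec{e}_{t-1} \oplus \sum k_j \vec{e}_j$ can make $v'_t$ look like a linear combination of the standard-basis columns already present at earlier positions of $V'$, so one must be careful about what it means for an operator to send these columns to $\vec{e}_t$. My plan is to treat $R_2$ as being defined by its prescribed action on the basis $\{v'_1, \ldots, v'_n\}$ of target vectors (using the augmentation of $U$ to an invertible $n \times n$ matrix via Lemma \ref{lemmaUV} to pick a clean reference basis), and then package $R_2$ as an $n \times n$ matrix over $\mathbb{F}_2$ to which Lemma \ref{thorem:res} directly applies. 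If any residual correction is needed to reconcile duplicated columns, it can be absorbed into $O(n)$ extra gates, which is negligible against the $O(n^2/\log n)$ budget.
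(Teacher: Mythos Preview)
Your proposal has a genuine gap at exactly the point you flag as the ``main obstacle.'' The matrix $V'$ is \emph{not} invertible in general: in the ``otherwise'' case for an even index $t$, the column $v'_t=\vec e_{t-1}\oplus\sum_{j=1}^{t-2}k_j\vec e_j$ lies in the span of $\{v'_1,\ldots,v'_{t-1}\}=\{\vec e_1,\ldots,\vec e_{t-1}\}$, so the diagonal entry $v'_{t,t}$ is $0$ and $V'$ is singular over $\mathbb F_2$ (the paper's own worked example has $v'_4=v'_1\oplus v'_2\oplus v'_3$). Consequently there is no linear $R_2$ with $R_2V'=I$, and your plan to ``define $R_2$ by its action on the basis $\{v'_1,\ldots,v'_n\}$'' fails because this set is not a basis; no $\mathbb F_2$-linear map can send a linearly dependent family to a linearly independent one. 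The phrase ``residual correction \ldots absorbed into $O(n)$ extra gates'' is thus not a mop-up step but the entire missing ingredient.

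What the paper actually does is a \emph{nonlinear} repair before invoking Lemma~\ref{thorem:res}: for each offending even $t$ it applies a Toffoli gate on wire $q_t$ with controls $q_{t-1}$ and some $q_{t'}$ with $t'<t-1$ (both guaranteed to be $1$ in column $v'_t$ by property~(2)), flipping $v'_{t,t}$ to $1$. The substantive work is arguing that (i) such a $t'$ exists, (ii) earlier columns $v'_{<t}$ are untouched because $v'_{t-1}=\vec e_{t-1}$ and upper-triangularity forces their $(t-1)$-th coordinate to vanish, and (iii) the structural form of later even columns is preserved so the process can continue. After at most $2^{m-1}-1=O(n)$ Toffoli gates one obtains a full-rank upper-triangular $V''$, and only then does Lemma~\ref{thorem:res} yield the linear $R_2$ with $R_2V''=I$ at cost $O(n^2/\log n)$. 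Your outline skips this nonlinear step, which is where the actual content of the lemma lives.
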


\begin{proof} We first show that one can use no more than $O(n)$  elementary gates to transform $V'$  to an upper triangular matrix with the main diagonal elements being $1$. More specifically, we address  each column $v'_t$ of $V'$ from left to right (i.e., $t$ begins with $1$ and ends with $n$) as follows. 

First note that the column $v'_t$ with $ t \in \{1,2,3,5,7,\cdots, 2^{m}-1\}$ requires no more treatment, since it has been in the form of  a column of $I$, that is, $v'_t=\Vec{e}_{t}$. 

Now given $v'_t$ with  $t \in \{4,6,8,\cdots, 2^m\}$, if $v'_{t,t}=1$, then no operation will be done. Otherwise, according to the property (2) of $V'$, it must be in the form: $v'_t=\Vec{e}_{t-1} \oplus \sum_{j=1}^{t-2}{k_j \Vec{e}_{j}}$ for at least one $k_j$ being 1. In other words, $v'_{t-1,t}=1$ and there exists $ t' < t-1$ such that $v'_{t',t}=1$. Then we  apply a Toffoli gate on the bit $q_t$, with  control bits being $q_{t-1}, q_{t'}$, which changes $v'_{t,t}$ to be $1$. Note that this gate will not change $v'_{l<t}$ (the columns before $v'_t$) by observing the following two points:  (i) by the property (1) of $V'$, $v'_{t-1}=\Vec{e}_{t-1}$ will not be affected by the Toffoli gate;  (ii) for $v'_1, v'_2, \cdots,  v'_{t-2}$, their $(t-1)$-th bits are all 0 due to the fact that $V'$ is an upper triangular matrix, and thus they will not be affected by the Toffoli gate too. Therefore, the matrix is still  upper triangular after the above Toffoli gate. 

 We now consider the effect of the above Toffoli gate on $v'_{l>t}$ (the columns after $v'_t$). If the Toffoli gate doesn't work on $v'_l$, the properties of $V'$  still hold. Obviously, the Toffoli gate has no effect on the odd column of $V'$, since they have only one $1$. 
 
 Suppose $v'_l$ is affected by the Toffoli gate generated by $v'_t$, then it must hold that $l\geq t+2$ and $l \in \{4,6,8,\cdots, 2^m\}$. 
 It's easy to check $v'_l=\Vec{e}_{l-1} \oplus \sum_{j=1}^{l-2}{k_j \Vec{e}_{j}}$ still holds for at least one $k_j$ being 1 after the Toffoli gate ($k_{t-1}, k_{t'}$ are 1 and $k_t$ flips). 
 Therefore, the properties of $v'_{l>t}$ are not changed by the Toffoli gate generated by $v'_t$. As a result, we can similarly deal with the next column $v'_{(t+1)}$.

 Because there are no more than $2^{m-1}-1$ vectors satisfying $v_t\neq u_t$, this procedure will cost no more than $2^{m-1}-1$ Toffoli gates, which means no more than $O(n)$ elementary gates are needed.  After the above procedure, $V'$ has been transformed to an upper triangular matrix $V''$ with the main diagonal elements being $1$, and thus is a full rank matrix.  Then we can use another $n\times n$ reversible matrix $R_2$ over $ \mathbb{F}_2$ such that $R_2 V''=I$. Note that $R_2$ consumes  $O(n^2/\log n)$ CNOT gates by Lemma \ref{thorem:res}.
 Therefore, the total gate cost is $O(n^2/\log n) +O(n) =O(n^2/\log n)$.
\end{proof}

The three Lemmas above can be summarized as follows.

\begin{lemma}\label{Pelemma}
    Let $m=\lfloor\log_2n\rfloor-1$. When $|S_P|>2^{n/2-1}$, $2^{m-1}$ pairs of $(i, x_i)$ from $S_P$ can be selected, for which we can use no more than $O(n^2/\log n)$ elementary gates to construct an reversible operator $P_e$  satisfying $P_e\ket{i_t}=\ket{e_{2t-1}}$ and $ P_e\ket{x_{i_t}}=\ket{e_{2t}}$ for any selected pair $(i_t, x_{i_t})$.
\end{lemma}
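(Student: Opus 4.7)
The plan is to directly compose the constructions from Lemmas \ref{lemma-matrix}, \ref{lemmaUV}, and \ref{id-lemma}. First, I would invoke Lemma \ref{lemma-matrix} with the hypothesis $|S_P|>2^{n/2-1}$ to select the $2^{m-1}$ pairs $(i_t,x_{i_t})$ and to form the matrices $V$ and $U$, augmented as remarked after Lemma \ref{lemma-matrix} to full $n\times n$ matrices. Next, Lemma \ref{lemmaUV} yields the $n\times n$ $\mathbb{F}_2$-reversible matrix $R_1$ with $R_1U=I$ and the upper triangular $V'=R_1V$ having the listed structure; by Lemma \ref{thorem:res} this $R_1$ is realized as a CNOT subcircuit $\mathcal{C}_1$ of size $O(n^2/\log n)$. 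Finally, Lemma \ref{id-lemma} produces a subcircuit $\mathcal{C}_2$ of size $O(n^2/\log n)$ (a short Toffoli stage followed by a CNOT subcircuit realizing some $R_2$) whose column-wise action transforms $V'$ into $I$. I would then set $P_e:=\mathcal{C}_2\mathcal{C}_1$ and verify its action on the target basis states.

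The verification is column-tracking. Since CNOT and Toffoli gates act on each column of a running matrix independently, a CNOT circuit realizing $M$ sends $\ket{\vec{y}}$ to $\ket{M\vec{y}}$, and the Toffoli stage of $\mathcal{C}_2$ applied to $\ket{\vec{v}'_k}$ produces $\ket{\vec{v}''_k}$, where $V''$ is the matrix obtained from $V'$ by the same Toffoli sequence; this is exactly the invariant already used in the proof of Lemma \ref{id-lemma}. For $\ket{i_t}=\ket{\vec{v}_{2t-1}}$: $\mathcal{C}_1$ sends it to $\ket{\vec{e}_{2t-1}}$; the Toffoli stage fixes any single-bit basis state, since a two-control Toffoli cannot fire when only one bit is $1$, so it fixes $\ket{\vec{e}_{2t-1}}$; and $R_2$ also fixes it because the $(2t-1)$-th column of $V''$ is still $\vec{e}_{2t-1}$. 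For $\ket{x_{i_t}}=\ket{\vec{v}_{2t}}$: $\mathcal{C}_1$ sends it to $\ket{\vec{v}'_{2t}}$, which $\mathcal{C}_2$ then sends to $\ket{\vec{e}_{2t}}$ by the defining property that $\mathcal{C}_2$ transforms the column $\vec{v}'_{2t}$ of $V'$ into the $(2t)$-th column of $I$.

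Summing the gate counts yields $O(n^2/\log n)+O(n)+O(n^2/\log n)=O(n^2/\log n)$, matching the claim. The main conceptual point to check is the column-tracking principle---that applying the Toffoli portion of $\mathcal{C}_2$ to the basis state $\ket{\vec{v}'_k}$ gives $\ket{\vec{v}''_k}$---but this is immediate from the bitwise action of Toffolis and is the same invariant already established inside the proof of Lemma \ref{id-lemma}. Granting the three prior lemmas, the present lemma is essentially a packaging of them.
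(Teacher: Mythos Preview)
Your proposal is correct and follows exactly the paper's approach: the paper's proof simply notes that $P_e$ is built from the two linear reversible matrices $R_1,R_2$ plus at most $2^{m-1}-1$ Toffoli gates coming out of Lemmas~\ref{lemma-matrix}, \ref{lemmaUV}, and~\ref{id-lemma}, and then sums their costs. Your write-up adds an explicit verification of the action on $\ket{i_t}$ and $\ket{x_{i_t}}$ that the paper leaves implicit, but the structure of the argument is identical.
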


\begin{proof}
    The lemmas above have shown that $P_e$  costs two $n\times n$ reversible matrices and no more than $2^{m-1}-1$ Toffoli gates.  Therefore, it will cost no more than $O(n^2/\log n)$ elementary gates to implement $P_e$.
\end{proof}

After proving these lemmas, we obtain the following crucial result.

\begin{theorem}\label{mainswap}
    For a given $2^n \times 2^n$ permutation matrix P, there exists a procedure using no more than $O(2^n n/\log n )$ elementary gates to implement a permutation matrix $P_Q$, such that $|S_{P_Q P}|\leq 2^{n/2-1}$.
    
\end{theorem}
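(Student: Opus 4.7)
The plan is to prove Theorem \ref{mainswap} by iterative reduction: I would construct $P_Q$ as a product $P_Q = Q_d \cdots Q_1$ in which each $Q_k$ costs $O(n^2/\log n)$ gates and decreases $|S|$ by at least $2^{m-1}$ indices. Since $m = \lfloor\log_2 n\rfloor - 1$ gives $2^{m-1} = \Theta(n)$ and the initial $|S_P| \leq 2^n$, this takes $d = O(2^n/n)$ iterations to drive $|S|$ below $2^{n/2-1}$, yielding a total gate count of $O(2^n/n) \cdot O(n^2/\log n) = O(2^n n/\log n)$.

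Concretely, I would initialize $P^{(0)} = P$ and update $P^{(k+1)} = Q_k P^{(k)}$. While $|S_{P^{(k)}}| > 2^{n/2-1}$, I would invoke Lemma \ref{Pelemma} on $P^{(k)}$ to obtain $2^{m-1}$ pairs $(i_t, x_{i_t})$ and an operator $P_e^{(k)}$ of cost $O(n^2/\log n)$ satisfying $P_e^{(k)} \ket{i_t} = \ket{e_{2t-1}}$ and $P_e^{(k)} \ket{x_{i_t}} = \ket{e_{2t}}$. I would then set
\[
Q_k \;=\; \bigl(P_e^{(k)}\bigr)^{-1}\, S \, P_e^{(k)},
\]
where $S$ is the permutation that swaps $\ket{e_{2t-1}} \leftrightarrow \ket{e_{2t}}$ for $t = 1, \ldots, 2^{m-1}$ and fixes every other basis state. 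By the conjugation, $Q_k$ swaps $\ket{i_t} \leftrightarrow \ket{x_{i_t}}$ for each selected pair and fixes every other basis state, so $Q_k P^{(k)} \ket{i_t} = \ket{i_t}$ for each $t$; moreover, no previously fixed point can be disturbed (the $i_t$ and $x_{i_t}$ lie in $S_{P^{(k)}}$ and hence were not fixed before), confirming the invariant $|S_{P^{(k+1)}}| \leq |S_{P^{(k)}}| - 2^{m-1}$.

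The hard part will be implementing the standardized swap $S$ within the $O(n^2/\log n)$ budget. A direct application of Lemma \ref{lemma2.2} to each of the $2^{m-1} = \Theta(n)$ pair-transpositions would cost $\Theta(n^3)$, which blows the budget. The savings must come from the regularity of $S$: every swap acts on weight-one basis vectors $\ket{e_{2t-1}}, \ket{e_{2t}}$ differing in exactly two bits and sharing the same ``all other bits are zero'' activation condition up to a two-bit relabeling across different $t$. My expectation is that by amortizing the multi-controlled verification across the $2^{m-1}$ pair swaps and absorbing the bit-rearrangement portion into an efficient linear-reversible synthesis via Lemma \ref{thorem:res}, one can realize $S$ using $O(n^2/\log n)$ elementary gates. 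Combined with the two factors of $P_e^{(k)}$, the full $Q_k$ then fits the per-iteration budget, and composing the $O(2^n/n)$ iterations establishes Theorem \ref{mainswap}.
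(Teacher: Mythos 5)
Your framework matches the paper's: conjugate a standardized multi-swap by the operator $P_e$ of Lemma \ref{Pelemma}, remove at least $2^{m-1}=\Theta(n)$ elements of $S$ per iteration, and multiply out $O(2^n/n)$ iterations at $O(n^2/\log n)$ gates each. The fixed-point bookkeeping and the iteration count are also right. But the proof has a genuine gap exactly where you flag it: you never construct the standardized swap $S$ (the permutation exchanging $\ket{e_{2t-1}}\leftrightarrow\ket{e_{2t}}$ for all $t$ and fixing every other basis state) within the per-iteration budget; you only state the expectation that amortization plus Lemma \ref{thorem:res} should work. That is the nontrivial content of the theorem beyond Lemma \ref{Pelemma}, and neither of your hinted mechanisms closes it as stated. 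In particular, Lemma \ref{thorem:res} cannot absorb the ``bit-rearrangement portion'' of $S$ itself, because $S$ is not a linear map: the $\mathbb{F}_2$-linear extension of $\vec{e}_{2t-1}\leftrightarrow\vec{e}_{2t}$ moves many strings outside the selected set (e.g.\ $\vec{e}_1\oplus\vec{e}_3\mapsto\vec{e}_2\oplus\vec{e}_3$ when the pair $(e_1,e_2)$ is swapped), which would create new elements of $S$ and destroy the invariant $|S_{P^{(k+1)}}|\leq|S_{P^{(k)}}|-2^{m-1}$. The whole point of $S$ is the nonlinear ``all other bits are zero'' verification, and ``amortizing'' that verification across the $\Theta(n)$ transpositions needs an explicit gadget, not just a cost estimate.

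The paper supplies that gadget, and it is cheaper than your $O(n^2/\log n)$ target: a reversible map $P_{map}$ sends $\ket{e_i}\mapsto\ket{y_i}=\ket{10\cdots0}_{n-m}\ket{i-1}_m$, i.e.\ it compresses the $2^m$ weight-one targets onto a common $(n-m)$-bit prefix with distinct $m$-bit suffixes, at cost $O(2^m m)=O(n\log n)$ using one C$^{m}$NOT and $O(m)$ CNOTs per $e_i$. After this relabeling, $y_{2t-1}$ and $y_{2t}$ differ only in the last qubit, so a \emph{single} C$^{n-m}$NOT targeted on $q_n$ and controlled on the prefix pattern performs all $2^{m-1}$ transpositions simultaneously while provably fixing everything else (reversibility of $P_e$ and $P_{map}$ guarantees no unselected state lands on a $y_i$). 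Then $S=P_{map}^{-1}T_{ms}P_{map}$ costs $O(n\log n)$ and the iteration cost is dominated by $P_e$, giving $O(n^2/\log n)$ per iteration as you wanted. Without this (or an equivalent explicit construction), your argument establishes the reduction framework but not the per-iteration gate bound, so the theorem is not yet proved.
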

\begin{proof}
    The procedure is as follows:
    \begin{enumerate}[(1)]
        \item 
        Let $m=\lfloor\log_2n\rfloor-1$. Take $2^{m-1}$ pairs  $\{(i_t, x_{i_t}): t=1,2,\cdots,  2^{m-1} \}$ from $S_P$ which satisfy the requirement in Lemma \ref{lemma-matrix}. Let $\ket{a_t}=\ket{i_t}$, $ \ket{b_t} = \ket{x_{i_t}}(\forall t \in \{1,2,\cdots,  2^{m-1}\})$. If no such $2^{m-1}$ pairs can be selected, the procedure ends.
        
        \item
        Construct a reversible operator $P_e$, which satisfies $\forall t \in \{1,2,\cdots,  2^{m-1}\}$, $P_e\ket{a_t}=\ket{e_{2t-1}}$, $P_e\ket{b_t}=\ket{e_{2t}}$. 
        % We don't need to consider its effect on other $\ket{x}$ since they won't be changed after an iteration.
        {\color{ black} Since $P_e$ is reversible, for those $x\notin \{a_1, b_1, \cdots, a_{2^{m-1}}, b_{2^{m-1}}\}$, $P_e\ket{x} \notin \{\ket{e_1}, \ket{e_2}, \cdots, \ket{e_{2^m}}\}$.}
        
        \item
        Construct a reversible operator $P_{map}$, which satisfies $P_{map}\ket{e_i}=\ket{y_i}(\forall i \in \{1,2,\cdots,  2^{m}\})$, where $\ket{y_i}=\ket{10\cdots 0}_{n-m}\ket{i-1}_m$ and $\ket{i-1}_m$ means the $m$-bit binary string of $i-1$. 
        % Similarly, we don't need to consider those $\ket{x}\neq \ket{e_i}$ since they won't be changed after an iteration.
        {\color{ black} Since $P_{map}$ is reversible, for those $x\notin \{e_1, e_2, \cdots, e_{2^m}\}$, $P_{map}\ket{x} \notin \{\ket{y_1}, \ket{y_2}, \cdots, \ket{y_{2^m}}\}$.}

        \item
        Deploy a C$^{n-m}$NOT gate on the bit $q_{n}$, with the control bits being the first $n-m$ bits $\ket{10\cdots 0}_{n-m}$. We call this step a main swap, and use $T_{ms}$ to represent its matrix form.

        \item
        Apply $P^{-1}_{map}$, $P^{-1}_e$.  After these operations mentioned before, we get an operator $Q=P^{-1}_e P^{-1}_{map} T_{ms} P_{map}P_e$, which satisfies
       \begin{equation}
       \begin{aligned}
            Q\ket{x}=\left\{
            \begin{array}{rcl}
                \ket{b_t}, & & x = a_t\\
                \ket{a_t}, & & x = b_t\\
                \ket{x}, & & else
            \end{array} \right. \ 
            \\  (\forall t \in \{1,2,\cdots,  2^{m-1}\}) \ .
        \end{aligned}
        \end{equation}
        
        \item
         Update $P=QP$. Return to Step 1.
         
    \end{enumerate}
    {\color{ black}We now explain why $Q\ket{x}=\ket{x}$ for other values of $x$.
    Combining step (2) and step (3), we have $P_{map}P_e\ket{x}\notin \{\ket{y_1}, \ket{y_2}, \cdots, \ket{y_{2^m}}\}$, and thus $Q \ket{x}=P^{-1}_e P^{-1}_{map} P_{map}P_e\ket{x}=\ket{x}$.}
    % {\color{ black}We now explain why, in steps (2) and (3), we can ignore the influence for other values of $\ket{x}$. For those $x \neq a_t,b_t$, $T_{ms}$ won't work on $P_{map}P_e\ket{x}$ and thus $Q\ket{x} = P^{-1}_e P^{-1}_{map}  P_{map}P_e\ket{x}=\ket{x}$.}
    \begin{figure}[h]
        \centering
        \includegraphics[width=1\textwidth]{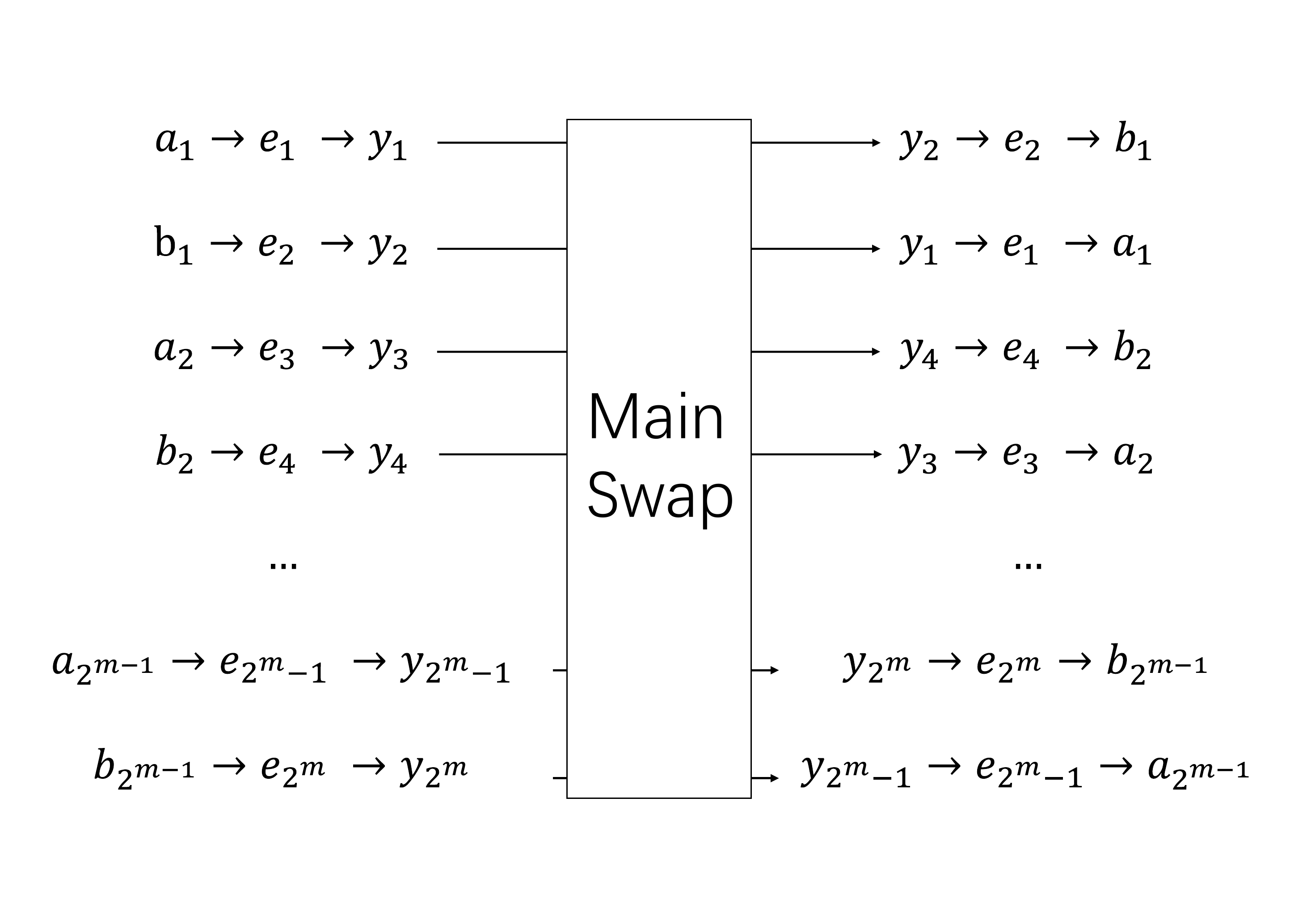}
        \caption{Framework of Theorem \ref{mainswap}}
        \label{p1}
    \end{figure} 
    The framework of this procedure is shown in Figure \ref{p1}. After each iteration, we have $QP\ket{a_t}=\ket{a_t}$ for all $ t \in \{1,2,\cdots,  2^{m-1}\}$. In addition, for $ i$ satisfying $P\ket{i}=\ket{i} $, we have $QP\ket{i}=\ket{i}$. Thus, $|S_P| - |S_{QP}| \geq 2 ^{m-1} \geq n/8$, which means the procedure won't work more than $8\cdot 2^n/n$ iterations. 

    According to Lemma \ref{Pelemma}, $P_e$ needs $O(n^2/\log n)$ elementary gates. 
    
  Here we  show how to construct $P_{map}$. For each $e_i$, we use $q_i$ as the control bit to apply CNOT gate to flip $q_1$ such that $q_1=1$ and the last $m$ bits to be the binary string of $i-1$. The result is that we transform $\ket{e_i}$ to $\ket{10\cdots0}_{i-1}\ket{10\cdots0}_{n-m-i+1}\ket{i-1}_m$.  The cost of this transformation is no more than $m+1$ gates. Then apply a C$^{m}$NOT gate on bit $q_i$ with control bits being the last $m$ bits, which flips the $i$-th bit when the last $m$ bits are $\ket{i-1}_m$.  As a result,  $\ket{e_i}$ is transformed to $\ket{y_i}$. Because $2^m<n-m$, these gates have no effect on $\ket{y_j}$ and $\ket{e_{j \neq i}}$ for any $j$.  Since Lemma \ref{costofCNOT} shows these C$^m$NOT gates have a linear cost, $P_{map}$ costs no more than $O(2^m m)=O(n\log n)$ elementary gates.
      {\color{ black} Without losing generality, we can assume that $m>1$. According to Lemma \ref{costofCNOT}, the cost of C$^{n-m}$NOT gate in step (4) is $O(n-m)$.}
    
    Therefore, each $Q$ needs no more than $O(n^2/\log n+n\log n)=O(n^2/\log n)$ elementary gates. Let $P_Q=Q_d\cdots Q_2 Q_1$, where $d$ is the number of iterations and has been  shown to be less than $ 8\cdot 2^n/n$.  Therefore, the total cost of this procedure is no more than $O(2^n n/\log n)$. 

    When this procedure ends, it means we can't select $2^{m-1}$ pairs as Lemma \ref{lemma-matrix} requires, then we get $|S_{P_Q P}|\leq 2^{n/2-1}$.
\end{proof}

\noindent \textbf{Example.} The following example illustrates our method. Suppose $n=8$, and thus $m=2$. The reversible function we want to implement is as follows:
    \begin{equation}
             f(x)=\left\{
            \begin{array}{rcl}
                0, & & x=0\\
                1, & & x=254\\
                2, & & x=255\\
                x+2,& & else
            \end{array} \right. 
    \end{equation}

(1) We select the pair $(1,3)$, then we have
\begin{equation}
    U=
\begin{bmatrix}
    0& 0\\
    0& 0\\
    0& 0\\
    0& 0\\
    0& 0\\
    0& 0\\
    0& 1\\
    1& 1\\
\end{bmatrix},
V=
\begin{bmatrix}
    0& 0\\
    0& 0\\
    0& 0\\
    0& 0\\
    0& 0\\
    0& 0\\
    0& 1\\
    1& 1\\
\end{bmatrix}.
\end{equation}

(2) Since $\Vec{1}$, $\Vec{2}$, $\Vec{3}$ can be represented by the linear combination of $U$, the pairs $(254, 1)$, $(255, 2)$, $(1, 3)$, $(2, 4)$, $(3, 5)$ are invalid. Now we select $(4, 6)$, and update
\begin{equation}
    V=
    \begin{bmatrix}
        0& 0& 0& 0\\
        0& 0& 0& 0\\
        0& 0& 0& 0\\
        0& 0& 0& 0\\
        0& 0& 0& 0\\
        0& 0& 1& 1\\
        0& 1& 0& 1\\
        1& 1& 0& 0\\
    \end{bmatrix}.
\end{equation}
 Since $\Vec{6}$ is the linear combination of $\{\Vec{1}, \Vec{3}, \Vec{4}\}$, we take $\Vec{8}$ as $u_4$ and update
\begin{equation}
    U=
    \begin{bmatrix}
        0& 0& 0& 0\\
        0& 0& 0& 0\\
        0& 0& 0& 0\\
        0& 0& 0& 0\\
        0& 0& 0& 1\\
        0& 0& 1& 0\\
        0& 1& 0& 0\\
        1& 1& 0& 0\\
    \end{bmatrix}.
\end{equation}

(3) After $2^{m-1} = 2$ pairs have been selected, augment  $U$ so that it is a full rank matrix 
\begin{equation}
    U=
    \begin{bmatrix}
        0& 0& 0& 0& 0& 0& 0& 1\\
        0& 0& 0& 0& 0& 0& 1& 0\\
        0& 0& 0& 0& 0& 1& 0& 0\\
        0& 0& 0& 0& 1& 0& 0& 0\\
        0& 0& 0& 1& 0& 0& 0& 0\\
        0& 0& 1& 0& 0& 0& 0& 0\\
        0& 1& 0& 0& 0& 0& 0& 0\\
        1& 1& 0& 0& 0& 0& 0& 0\\
    \end{bmatrix}.
\end{equation}
Then, augment  $V$ with the same vectors. We have 
\begin{equation}
    V=
    \begin{bmatrix}
        0& 0& 0& 0& 0& 0& 0& 1\\
        0& 0& 0& 0& 0& 0& 1& 0\\
        0& 0& 0& 0& 0& 1& 0& 0\\
        0& 0& 0& 0& 1& 0& 0& 0\\
        0& 0& 0& 0& 0& 0& 0& 0\\
        0& 0& 1& 1& 0& 0& 0& 0\\
        0& 1& 0& 1& 0& 0& 0& 0\\
        1& 1& 0& 0& 0& 0& 0& 0\\
    \end{bmatrix}.
\end{equation}

(4) Construct   an  $n\times n$ reversible matrix $R_1$ which satisfies $R_1 U = I$. Meanwhile, 
\begin{equation}
    V'=R_1 V=
    \begin{bmatrix}
        1& 0& 0& 1& 0& 0& 0& 0\\
        0& 1& 0& 1& 0& 0& 0& 0\\
        0& 0& 1& 1& 0& 0& 0& 0\\
        0& 0& 0& 0& 0& 0& 0& 0\\
        0& 0& 0& 0& 1& 0& 0& 0\\
        0& 0& 0& 0& 0& 1& 0& 0\\
        0& 0& 0& 0& 0& 0& 1& 0\\
        0& 0& 0& 0& 0& 0& 0& 1\\
    \end{bmatrix}.
\end{equation}

(5) Apply a Toffoli gate on $q_4$, with the control bits being $q_1$ and $q_3$. Then we have 
\begin{equation}
    V''=
    \begin{bmatrix}
        1& 0& 0& 1& 0& 0& 0& 0\\
        0& 1& 0& 1& 0& 0& 0& 0\\
        0& 0& 1& 1& 0& 0& 0& 0\\
        0& 0& 0& 1& 0& 0& 0& 0\\
        0& 0& 0& 0& 1& 0& 0& 0\\
        0& 0& 0& 0& 0& 1& 0& 0\\
        0& 0& 0& 0& 0& 0& 1& 0\\
        0& 0& 0& 0& 0& 0& 0& 1\\
    \end{bmatrix}.
\end{equation} Since $V''$ is a full rank matrix now, we can construct another    $n \times n$ reversible matrix $R_2$ such that $R_2 V''=I$. Step 4 and Step 5 construct $P_e$.

(6) For $\ket{e_1}=\ket{y_1}=\ket{10000000}$, we don't need to do anything.

For $\ket{e_2}=\ket{01000000}$, apply CNOT gate on $q_1, q_8$ with the control bit being $q_2$ and we have $\ket{11000001}$.  Apply C$^2$NOT gate on $q_2$ with  the control bits being $\Bar{q_7}, q_8$, getting $\ket{y_2}=\ket{10000001}$.

For $\ket{e_3}=\ket{00100000}$, apply CNOT gate on $q_1, q_7$ with control bit is $q_3$ and we have $\ket{10100010}$.  Apply C$^2$NOT gate on $q_3$ with control bits being $q_7, \Bar{q_8} $, getting $\ket{y_3}=\ket{10000010}$.

For $\ket{e_4}=\ket{00010000}$, apply CNOT gate on $q_1, q_7, q_8$ with the control bit being $q_4$ and we have $\ket{10010011}$.  Apply C$^2$NOT gate on $q_4$ with the control bits being $q_7, q_8$, getting  $\ket{y_4}=\ket{10000011}$.

Obviously, the gates above have no effect on other $\ket{e_i}$ and $\ket{y_i}$.

(7) Apply a C$^6$NOT gate on $q_8$, with the control bits being the first 6 bits $\ket{100000}$. Then apply the gates above in a reverse order, and get an operator $Q$, which satisfies
\begin{equation}
    Q\ket{x}=\left\{
    \begin{array}{rcl}
        \ket{3} & & x=1\\
        \ket{1} & & x=3\\
        \ket{6} & & x=4\\
        \ket{4} & & x=6\\
        \ket{x} & & else
    \end{array} \right. .   
\end{equation}
Therefore,
\begin{equation}
    QP\ket{x}=\left\{
    \begin{array}{rcl}
        \ket{x} & & x=0,1,4\\
        \ket{3} & & x=254\\
        \ket{6} & & x=2\\
        \ket{2} & & x=255\\
        \ket{x + 2} & & else
    \end{array} \right. .
\end{equation}

Here we have $|S_{QP}|= |S_P|-2$. This is  how an iteration works. Then we update $P=QP$, and repeat the above procedure until we can't take $2^{m-1}$ pairs which satisfy the requirement of Lemma \ref{lemma-matrix}.

Now we give the proof of Theorem \ref{maintheorem}.

\noindent\textbf{Proof of Theorem \ref{maintheorem}.}
    For a given $n$-bit circuit, we use a $2^n \times 2^n$ permutation matrix $P$ to represent it. Theorem \ref{mainswap} proves that we can use no more than $O(2^n n/\log n)$ elementary gates to implement $P_Q$, such that $|S_{P_Q P}|\leq 2^{n/2-1}$.

    Let 
    \begin{equation}
        P_r=P_QP \ .
    \end{equation} We call $P_r$ the rest part, and have $|S_{P_r}|\leq 2^{n/2-1}$. Then we can use Lemma \ref{lemma2.2} to address the rest part, and the cost is no more than $O(2^{n/2}n^2)$. Hence, the total cost is $O(2^n n/\log n+2^{n/2}n^2 )=O(2^n n/\log n )$.

\section{Improving The  Decomposition of Rest Part}\label{sec4}
For the rest part $P_r$ of Theorem \ref{mainswap}, the algorithm provided by Lemma \ref{lemma2.2} is really trivial. In this section, we propose an algorithm using no more than $2^{n/2-1}(4n+64 \log n + O(1))$ elementary gates to implement it. The best algorithm so far requires no more than $2^{n/2-1}(8.5n+O(1))$ \cite{saeedi2010reversible}. Our algorithm will have a better performance when $n > 100$.

Since we want to compare with the best result so far, we need to use the same gate library. The gate library includes NOT, CNOT, Hadamard, controlled-V gate,with $V=\frac{i+1}{2} \begin{bmatrix}
    1 & -i\\
    -i & 1
\end{bmatrix}$ and rotation gate $R_a(\theta)$, which $a \in \{x,y,z\}$ and $\theta \in [0,2\pi]$. By Ref.\cite{maslov2008quantum}, for $n\geq 5$ and $m \in \{3,4,\cdots,\lfloor n/2 \rfloor \}$, a C$^m$NOT gate can be implemented using no more than $12m-22$ elementary gates. For $n \geq 7$ and $m \in \{\lfloor n/2 \rfloor +1,\lfloor n/2 \rfloor +2, \cdots ,n-2\}$, a C$^m$NOT gate can be implemented using no more than $24m-40$ elementary gates 

\begin{lemma}\label{lemma_distinct}
    When $|S_P|>3 \cdot 2^{m-1}-3$, $2^{m-1}$ pairs of $(i, x_i)$ can be selected, such that all of them are distinct, where $m$ is an  integer satisfying $0\leq m\leq\lfloor\log_2n\rfloor$.
\end{lemma}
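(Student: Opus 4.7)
The plan is to prove this by a greedy construction paralleling the one used in Lemma \ref{lemma-matrix}, but where the constraint is element-disjointness rather than linear independence. I will build the $2^{m-1}$ pairs one at a time, and at each step show that the pool of admissible pairs is nonempty provided $|S_P|$ exceeds the stated threshold.

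First I would set up the following framework. Think of the set of candidate pairs as $\mathcal{P} = \{(j, x_j) : j \in S_P\}$, so $|\mathcal{P}| = |S_P|$. We say a pair $(j, x_j)$ is \emph{blocked} by the pairs previously selected if either $j$ or $x_j$ has already appeared in one of those pairs. The goal is to select $2^{m-1}$ pairs iteratively so that each new pair is unblocked; distinctness then follows since every element chosen is used only once.

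The key combinatorial step is to bound, for any single selected pair $(i_s, x_{i_s})$, how many pairs in $\mathcal{P}$ contain one of its two elements. The element $i_s$ appears in exactly two pairs of $\mathcal{P}$, namely $(i_s, x_{i_s})$ and $(f^{-1}(i_s), i_s)$; similarly $x_{i_s}$ appears in $(i_s, x_{i_s})$ and $(x_{i_s}, f(x_{i_s}))$. Therefore each selected pair can block at most three pairs from $\mathcal{P}$, including itself. After $t-1$ selections, the number of blocked pairs is at most $3(t-1)$, so the number of pairs still available is at least $|S_P| - 3(t-1)$.

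To be able to extract the $t$-th pair for every $t \in \{1, 2, \ldots, 2^{m-1}\}$, it suffices that $|S_P| - 3(t-1) > 0$ in the worst case $t = 2^{m-1}$, i.e., $|S_P| > 3(2^{m-1}-1) = 3 \cdot 2^{m-1} - 3$, which is exactly the hypothesis. I would then conclude by induction that the greedy procedure produces $2^{m-1}$ pairwise element-disjoint pairs. I do not anticipate any serious obstacle; the only delicate point is to verify that the bound of three blocked pairs per selection is correct even in the presence of short cycles of $f$ (in a $2$-cycle the count is actually two, which only helps), so the worst-case estimate $3(t-1)$ is always valid.
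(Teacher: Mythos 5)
Your proposal is correct and follows essentially the same argument as the paper: a greedy selection in which each chosen pair $(i_s, x_{i_s})$ blocks at most the three pairs $(i_s, x_{i_s})$, $(x_{i_s}, f(x_{i_s}))$, $(f^{-1}(i_s), i_s)$, giving at most $3(t-1)$ unavailable pairs before the $t$-th selection and hence the threshold $|S_P| > 3\cdot 2^{m-1}-3$. Your extra check of the degenerate $2$-cycle case only strengthens the same counting bound, so no further changes are needed.
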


\begin{proof}
    Suppose we are selecting the $t$-th pair, which means $t-1$ pairs have been selected. Each selected pair $(i, x_i)$ would make $(i,x_i),(x_i,f(x_i)),(f^{-1}(i),i)$ unable to be selected. Hence there are at most $3t-3$ pairs  unavailable at this moment. Thus, when $|S_P|>3 \cdot 2^{m-1}-3$,  $2^{m-1}$ pairs of $(i, x_i)$ could be selected.
\end{proof}

\begin{theorem}
    For the rest part $P_r$ of Theorem \ref{mainswap} with $|S_{P_r}|\leq 2^{n/2-1}$, there exists an algorithm requiring no more than $2^{n/2-1}(4n + 64 \log n + O(1))$ elementary gates to implement it.
\end{theorem}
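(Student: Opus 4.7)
The plan is to mirror the batch-processing strategy of Theorem \ref{mainswap}, but adapted to exploit both the smaller residual $|S_{P_r}|\le 2^{n/2-1}$ and the explicit elementary-gate bounds for multi-controlled NOTs provided by Maslov. The key relaxation is that we no longer need the linear-independence requirement of Lemma \ref{lemma-matrix}: Lemma \ref{lemma_distinct} suffices to harvest $2^{m-1}$ disjoint pairs so long as $|S_P|>3\cdot 2^{m-1}-3$. This will let us keep batching until $|S_P|$ drops below that threshold, after which the leftover $O(2^{m-1})$ pairs can be swept up one-by-one at cost $O(n^2)$ each by Lemma \ref{lemma2.2}; for $m$ logarithmic in $n$ this residual is absorbed into the trailing $O(1)$ term.

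First, I would fix a batch parameter $m$ on the order of $\log n$ and in each iteration select $2^{m-1}$ disjoint pairs $(a_t,b_t)$ from $S_{P_r}$ via Lemma \ref{lemma_distinct}. For each batch, I would construct a permutation $P_e$ that maps $a_t\mapsto e_{2t-1}$ and $b_t\mapsto e_{2t}$. Because we no longer need the selected targets to be linearly independent, we cannot reuse the $R_1$/$R_2$ machinery of Section \ref{sec:opt}; instead, I would align each pair with its target by an individual CNOT prefix (as in the opening moves of Lemma \ref{lemma2.2}, but stopping short of the expensive $C^{n-1}$NOT). A careful accounting of prefix plus inverse-suffix gates should yield roughly $4n$ CNOTs per pair, supplying the dominant $4n$ term of the per-element bound.

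Next, I would replace the single $C^{n-1}$NOT used in Lemma \ref{lemma2.2} by the batched main-swap of Section \ref{sec:opt}: apply the $P_{map}$ of Theorem \ref{mainswap} to encode the $2^m$ targets into a short address block, perform one $C^{n-m}$NOT that flips all of them simultaneously, then undo $P_{map}$. Invoking the explicit Maslov bound, this $C^{n-m}$NOT costs at most $24(n-m)-40$ elementary gates, which amortizes to a vanishing $O(n/2^{m-1})$ contribution per pair. The $P_{map}$ build-up together with the auxiliary C$^m$NOTs inside it should cost $O(2^m m)$ gates per batch, yielding the $64\log n+O(1)$ term when distributed over the $2^{m-1}$ pairs in the batch. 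Summing per-iteration costs over at most $2^{n/2-1}/2^m$ iterations and adding the residual gives the claimed $2^{n/2-1}(4n+64\log n+O(1))$.

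The main obstacle will be numerical rather than structural: one must verify that tuning the cost-free algebraic simplifications of Section \ref{sec:opt} out of $P_e$ (and replacing them with direct per-element CNOT alignments) still keeps the leading coefficient on $n$ at $4$ rather than degrading it, and that the choice of $m$ simultaneously drives the amortized main-swap cost below $O(\log n)$ per pair while keeping the $P_{map}$ overhead within the $64\log n$ budget. Once those constants are pinned down and the break-even $m\sim\log n$ is selected, the theorem follows immediately from summing the per-iteration contributions and absorbing the $O(n^3)$ clean-up from Lemma \ref{lemma2.2} into the $O(1)$ term.
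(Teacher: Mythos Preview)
Your plan is essentially the paper's own approach: drop the linear-independence selection (Lemma~\ref{lemma-matrix}) in favour of Lemma~\ref{lemma_distinct}, rebuild $P_e$ by direct per-element CNOT alignment at cost $\approx n$ gates per target (hence $4n$ per removed element after counting $P_e$ and $P_e^{-1}$), keep the $P_{map}/T_{ms}/P_{map}^{-1}$ sandwich, and amortise. The per-iteration bookkeeping you sketch ($4n$ from $P_e$, $64\log n$ from $P_{map}$, vanishing contribution from the single $C^{n-m}$NOT) matches the paper's computation line for line.

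Two technical points the paper makes explicit and that your outline glosses over. First, the ``individual CNOT prefixes'' for the $2^m$ targets cannot be done independently: the paper processes them in order and, for the $t$-th target, uses a Toffoli/CNOT to set bit $q_t$ and then CNOTs \emph{controlled on $q_t$} to clear the rest, so that previously installed $e_{t'<t}$ (which have $q_t=0$) are untouched. Without this control discipline your prefixes would collide and the $n+O(1)$ per-target count would not hold. Second, the paper takes $m=\lfloor\log_2 n\rfloor$ (not $\lfloor\log_2 n\rfloor-1$), so $2^m$ can exceed $n-m$; the $P_{map}$ of Theorem~\ref{mainswap} then fails as written and the paper supplies an auxiliary construction for the indices $t>n-m$. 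If you instead keep $m=\lfloor\log_2 n\rfloor-1$ you avoid this but should recheck that the $64\log n$ constant survives.

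The only structural difference is the endgame: the paper decrements $m$ once $|S_{P_r}|\le 3\cdot 2^{m-1}-3$ and repeats down to $m=0$, finishing with at most two $C^{n-1}$NOTs; you instead sweep the residual $O(n)$ elements with Lemma~\ref{lemma2.2} at $O(n^2)$ each. Both leave a polynomial tail absorbed into $2^{n/2-1}\cdot O(1)$, so either is fine.
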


\begin{proof}
    % {\color{ black}The $O (1) $ is primarily attributed to Toffoli gates. Depending on their construction methods, the elementary gate cost may vary. For example, \cite{yu2013five} describes a method with a minimum two-bit gate overhead, while \cite{amy2013meet} describes a method with a minimal depth. We will use $O(1)$ to represent its cost and treat it as a constant-cost quantum gate.}

    The framework of this algorithm is the same as the one in Theorem \ref{mainswap}, but needs to slightly modify $P_e$ and $P_{map}$. In this algorithm, our $2^{m-1}$ pairs of $(i, x_i)$ only need to satisfy all the numbers are distinct. We set $m=\lfloor\log_2n\rfloor$ at the beginning.

    For $P_e$, it requires no more than $2^m(n+O(1))$ elementary gates. We need to transform the selected ones to $e_t$ one by one. We will do it in the ascending order. 
    
    When we are turning $v$ to $e_t$, check whether the $t$-th bit is 1. If not, since $v \neq 0$, we can find at least one $t_1$ satisfying the $t_1$-th bit is 1. If we can't find another $t_2\neq t_1$ satisfying the $t_2$-th bit is 1, 
    {\color{ black}we have $t_1>t$ since $v$ can't be $e_{t'<t}$}.
    Apply a CNOT gate on the $t$-th bit with $q_{t_1}$ being the control bit. Otherwise, use a Toffoli gate on the $t$-th bit with $q_{t_1}, q_{t_2}$ being the control bits. In other words, the maximum cost is $2^m - 1$ Toffoli gates, which come at a elementary gate cost of $2^m \cdot O(1)$. 
    {\color{ black}The $O (1) $ here is the cost of Toffoli gates. Depending on their construction methods, the elementary gate cost may vary. For example, \cite{yu2013five} describes a method with a minimum two-bit gate overhead, while \cite{amy2013meet} describes a method with a minimal depth. Here we will use $O(1)$ to represent its cost and treat it as a constant-cost quantum gate.}

    When the $t$-th bit is 1, apply no more than $n-1$ CNOT gates to turn the other bits to 0, whose control bit is $q_t$. Obviously, these gates have no effect on $e_{t'<t}$. 
    {\color{ black}Thus, $P_e$ can be constructed with no more than $2^m(n-1) + 2^m \cdot O(1) = 2^m(n +O(1))$ elementary gates.}

    For $P_{map}$, it's possible that $2^m > n-m$, so we need to give another way to transform $\ket{e_{t>n-m}}$ into $ \ket{ y_{t>n-m}}$. Let $\ket{e'_t}=\ket{10\cdots0}_{i}\ket{10..0}_{n-t-i}\ket{10..0}_{t}$, in which $t=n-m+i$ and $i \in \{1, 2,\cdots, 2^m-(n-m)\}$. We apply 2 CNOT gates on $q_1$ and $q_{i+1}$, whose control bit is $q_{t}$. After transforming all $e_{t'>n-m}$ to $e'_{t'>n-m}$, apply no more than $m$ Toffoli gates to transform the last $m$ bits to be $\ket{t'-1}_m$ with control bits being $q_1, q_{i+1}$.  Then apply C$^m$NOT gates on $q_{i+1}$ with the last $m$ bits being the control bits, which only flips the target bit when the last $m$ bits are $\ket{t'-1}_m$. Since these $m$ Toffoli gates have common control bits, {\color{ black}they can be done by $2m-2$ CNOT gates and one Toffoli gate \cite{maslov2005toffoli}, which requires $2m +O(1)$ elementary gates.
    Similarly, the $O(1)$ here is the cost of a Toffoli gate.}
    After that,  just follow Theorem \ref{mainswap} to deal with those $e_{j\leq n-m}$.

    After that, all the $e_{t'>n-m}$ have been transformed to $y_{t'>n-m}$. For each $e_{t'>n-m}$, it requires no more than $m$ Toffoli gates, a C$^m$NOT gate and two CNOT gates.  For those $e_{j\leq n-m}$, each needs no more than $m+1$ CNOT gates and one C$^m$NOT gate.
    
    Because $m =\lfloor \log_2n \rfloor \leq \lfloor n/2 \rfloor$, C$^m$NOT gate needs no more than $14m-22$ elementary gates. Thus, the cost of $P_{map}$ is less than $2^m(16m+O(1))$.

    {\color{ black} Similar to step (4) in Theorem \ref{mainswap}, we need to use a C$^{n-m}$NOT gate to implement the main swap.
    Since $n-m > n/2$, the cost of a general C$^{n-m}$NOT gate is $26(n-m)+O(1)$.}

    Therefore, the cost of an iteration is no more than $2^m(2n + 32m + O(1)) + 26(n-m) + O(1)$. 
    Each iteration removes at least $2^{m-1}$ elements from $S_{P_r}$. 
    % Hence, for each element in $S_{P_r}$, it needs no more than $4n + 64m + O(1)$ to implement.
    {\color{ black} Therefore, removing each element from $S_{P_r}$ takes no more than $4n + 64m +26 (n-m)/2 ^ {m-1} +O(1) = 4n + 64m + O(1)$ elementary gates  when $|S_{P_r}|$ is large enough (larger than $1.5n$).}
    
    Lemma \ref{lemma_distinct} shows that  when the required $2^{m-1}$ pairs are unavailable for the first time, we have $|S_{P_r}|<1.5n$, which won't affect the gate complexity. 
    
    We could reduce $m$ by 1, and repeat this algorithm until $m=0$. 
    {\color{ black} When $0<m<\lfloor \log_2n \rfloor$, the total gate cost is no more than 
    \begin{align}
        &\sum_{m=1}^{\lfloor \log_2n \rfloor -1} 1.5(2^{m+1} - 2^m)\cdot (4n + 64m +26 (n-m)/2 ^ {m-1} +O(1))\\
        &\leq \sum_{m=1}^{\lfloor \log_2n \rfloor -1}2^m (6n + 96m + O(1))+78(n-m) \\
        &\leq n\log_2n(6n + 96\log_2n + O(1)) +78n\log_2n
    \end{align}
    Obviously, this term will not affect the overall cost.
    % $1.5(2^{m+1} - 2^m)\cdot (4n + 64m +26 (n-m)/2 ^ {m-1} +O(1))\leq 84n + 18 m + O(1)$.
    }    
    When $m=0$, $|S_{P_r}|\leq 3$, this algorithm uses no more than 2 C$^{n-1}$NOT gates. Therefore, the total gate number is no more than 
    $2^{n/2-1}(4n + 64m + O(1))$.
\end{proof}

Here we give an example to help readers understand $P_{map}$ better. Suppose $n$=8 and $m$ =3.  We have to deal with $e_6, e_7, e_8$ first. We apply CNOT gates on $q_1, q_2$ with control bit $q_6$. Then we have $\ket{e'_6}=\ket{11000100}$. Similarly, $\ket{e'_7}=\ket{10100010}$ and $\ket{e'_8}=\ket{10010001}$. Then we use $q_1,q_2$ as control bits to apply a Toffoli gate on $q_8$ to get $\ket{11000101}$. After that, apply a C$^3$NOT gate on $q_2$ with the control bits being $q_6,\Bar{q_7},q_8$ to get $\ket{y_6}=\ket{10000101}$. Obviously, the above gates have no effect on $e'_{t\neq 6}, e_{t\neq 6}, y_{t\neq 6}$. By this pattern, we have $y_6, y_7. y_8$ and leave $e_1, e_2, e_3, e_4 ,e_5$ unchanged.

For $e_1, e_2, e_3, e_4 ,e_5$, just follow the steps in Theorem \ref{mainswap}, and those gates won't work on $y_6,y_7,y_8$.

\section{Conclusion}\label{secCon}
In this paper, we have  proved that any arbitrary $n$-bit reversible circuit can be implemented with no more than $O(2^n n/\log n)$ elementary gates without using any ancillary bits, thus achieving the lower bound $\Omega(2^n n/\log n)$  proposed in 2003 and closing the open problem in \cite{saeedi2013synthesis}. In the future, one may  consider more efficient methods, since  our method  costs much time to check  linear independence.

\section*{Acknowledgments}
The authors would like to thank  the anonymous re viewers for their thoughtful comments which helped greatly improve the presentation. This work was supported by the National Natural Science Foundation of China (Grant No. 62272492), the Guangdong Provincial Quantum Science Strategic
Initiative (Grant  No. GDZX2303007), and the Guangdong Basic and Applied Basic Research Foundation (Grant No. 2020B1515020050).

\bibliographystyle{unsrt}
\bibliography{sample}

\end{document}